\newtheorem{corollary}{Corollary}
\newtheorem{assumption}{Assumption}
\newtheorem{proposition}{Proposition}
\theoremstyle{definition}
\newtheorem{remark}{Remark}
\newcommand\reallywidehat[1]{%
\savestack{\tmpbox}{\stretchto{%
  \scaleto{%
    \scalerel*[\widthof{\ensuremath{#1}}]{\kern-.6pt\bigwedge\kern-.6pt}%
    {\rule[-\textheight/2]{1ex}{\textheight}}%WIDTH-LIMITED BIG WEDGE
  }{\textheight}% 
}{0.5ex}}%
\stackon[1pt]{#1}{\tmpbox}%
}
\def \dsE {\text{$\mathds{E}$}}
\def \dsR {\text{$\mathds{R}$}}
\DeclareMathOperator{\rank}{rk}
\def \avec {\text{\boldmath$a$}}
\def \dvec {\text{\boldmath$d$}}    
\def \fvec {\text{\boldmath$f$}}
\def \wvec {\text{\boldmath$w$}}    
\def \xvec {\text{\boldmath$x$}}    
\def \yvec {\text{\boldmath$y$}}    \def \mY {\text{\boldmath$Y$}}
\def \betavec         {\text{\boldmath$\beta$}}
\def \varepsilonvec   {\text{\boldmath$\varepsilon$}}
\def \zetavec         {\text{\boldmath$\zeta$}}
\def \thetavec        {\text{\boldmath$\theta$}}
\def \varthetavec     {\text{\boldmath$\vartheta$}}
\def \lambdavec       {\text{\boldmath$\lambda$}}
\def \muvec           {\text{\boldmath$\mu$}}
\def \xivec           {\text{\boldmath$\xi$}}
\def \rhovec          {\text{\boldmath$\rho$}}
\def \tauvec          {\text{\boldmath$\tau$}}
\def \phivec          {\text{\boldmath$\phi$}}
\def \nullvec {\mathbf{0}}
\newcommand{\blind}{1}
\renewenvironment{itemize}[1]{\begin{compactitem}#1}{\end{compactitem}}
\renewenvironment{enumerate}[1]{\begin{compactenum}#1}{\end{compactenum}}
\begin{document}

\def\spacingset#1{\renewcommand{\baselinestretch}%
{#1}\small\normalsize} \spacingset{1}

%%%%%%%%%%%%%%%%%%%%%%%%%%%%%%%%%%%%%%%%%%%%%%%%%%%%%%%%%%%%%%%%%%%%%%%%%%%%%%

\if1\blind
{
  \title{\bf Scalable Estimation for Structured Additive  Distributional Regression Through Variational Inference}
  \author{Jana Kleinemeier and Nadja Klein\thanks{
    Nadja Klein was supported by the Deutsche Forschungsgemeinschaft (DFG, German Research Foundation) through the Emmy Noether grant KL 3037/1-1. }\\
    \\
    Algonaut GmbH and\\
    Chair of Uncertainty Quantification and Statistical Learning,\\  %Research Center for Trustworthy Data Science and Security (UA Ruhr)\\ and 
    Department of Statistics, Technische Universit\"at Dortmund\\}
  \maketitle
} \fi

\if0\blind
{
  \bigskip
  \bigskip
  \bigskip
  \begin{center}
    {\LARGE\bf Scalable Estimation for Structured Additive\\  
    Distributional Regression Through\\ \vspace{0.3cm}
    
    Variational Inference}
\end{center}
  \medskip
}\fi

\thispagestyle{empty}

\bigskip
\begin{abstract}

Structured additive distributional regression models offer a versatile framework for estimating complete conditional distributions by relating all parameters of a parametric distribution to  covariates. Although these models efficiently leverage information in vast and intricate data sets, they often result in highly-parameterized models with many unknowns. Standard estimation methods, like Bayesian approaches based on Markov chain Monte Carlo methods, face challenges in estimating these models due to their complexity and costliness. To overcome these issues, we suggest a fast and scalable alternative based on variational inference. Our approach combines a parsimonious parametric approximation for the  posteriors of regression coefficients, with the exact conditional posterior for hyperparameters. For optimization, we use a stochastic gradient ascent method combined with  an efficient strategy to reduce the variance of estimators. We provide theoretical properties and investigate global and local annealing to enhance robustness, particularly against data outliers. Our implementation is very general, allowing  us to include various functional effects like penalized splines or complex tensor product interactions. In a simulation study, we demonstrate the efficacy of our approach in terms of accuracy and computation time. Lastly, we present two real examples illustrating the modeling of infectious COVID-19 outbreaks and outlier detection in brain activity.
\end{abstract}

\noindent%
{\it Keywords:} Annealing; GAMLSS; penalized splines; re-parameterization trick; stochastic gradient descent; 
%3 to 6 keywords, that do not appear in the title
\vfill
\setcounter{page}{0}
\newpage
\spacingset{1.9} % DON'T change the spacing!
\setlength{\abovedisplayskip}{0.0cm}
\setlength{\belowdisplayskip}{0.0cm}
% !TeX spellcheck = en_US

\section{Introduction}\label{sec:intro}
Traditionally, conditional mean regression models have received the most attention in statistical modeling. However, over the last decade, there has been a significant paradigm shift towards distributional regression models. This shift is motivated by the recognition that, in many fields, modeling the mean is not the sole or primary concern. Instead, there is a growing interest in probabilistic modeling, where quantities of interest extend beyond the mean to include extreme observations, conditional quantiles, and exceedance probabilities for specific response thresholds. Distributional regression encompasses various methods, such as quantile regression for estimating different quantiles of the conditional distribution \citep{Koe2005}, distribution regression \citep{ForPer1995}, isotonic distributional regression \citep{HenZieGne2021}, Gaussian process regression for modeling non-linear relationships and uncertainty estimation, or mixture models \citep{FruCelRob2019}; see e.g.~\citet{Kle2023} for a recent review on distributional regression.

However, achieving realistic models often requires more than just  flexibility in capturing the response distribution. Real-world data exhibits intricate and non-linear relationships between covariates and the outcome of interest. This recognition has led to a demand for greater modeling flexibility that moves beyond simple linear associations. Non-parametric smoothing functions like penalized splines or complex tensor product interactions can be used for interpretable data-driven flexibility determination.

For example, in the context of public health, like predicting daily COVID-19 infections as in our first illustration, the need for a flexible modeling approach is paramount. Daily infection counts often follow complex, non-Gaussian distributions with dynamic patterns that conventional models struggle to capture. Spatial information, time-dependent interventions, and non-linear effects are critical factors in understanding the virus' spread. A rigid modeling framework relying solely on parametric assumptions and linear relationships may fall short in capturing the multifaceted dynamics at play. 

Thus, realistic models often necessitate a complex approach, combining flexibility in capturing the response distribution and covariate effects, particularly when dealing with large data sets. In this paper, we focus on the class of generalized additive models for generalized additive models for location, scale and shape (GAMLSS) introduced by \citet{RigRobSta2005}, also known as structured additive distributional regression \citep[SADR;][]{KleKneLanSoh2015}. These models extend generalized additive models \citep[GAMs;][]{Woo2006} to arbitrary response distributions, where each distributional parameter can be related to covariates through a semiparametric predictor. Estimation of such models has so far been proposed in the penalized maximum likelihood \citep{RigRobSta2005}, statistical boosting \citep{MayFenKneSch2011}  and Bayesian \citep{KleKneLan2015} frameworks. 
The estimation of such highly parameterized models is often very time consuming and in the case of Bayesian estimation based on Markov chain Monte Carlo methods as in \citet{KleKneLanSoh2015} can lead to very slow convergence. While the approach of \citep{UmlSeiWetSimLanKle2023}, which proposes a novel backfitting algorithm based on stochastic gradient descent that can handle large data sets and performs automatic variable selection and smoothing parameter estimation, scales well to large-scale data, it does not allow for uncertainty quantification and inference. 
To address both, scalability to complex GAMLSS and access to uncertainty estimates of any quantity of interest derived from the conditional distribution,  we develop an approximate Bayesian approach to SADR models based on variational inference (VI) in which the posterior distribution is approximated by a tractable distribution \citep{Ble2017}. We use the Gaussian family as a variational approximation (VA) which is a  popular choice and often provides very accurate approximations to  posterior means/modes. Traditionally,  a diagonal covariance matrix of the VA is assumed. This however, implies that no posterior dependence between parameters exists \citep{KucRajGelBle2015} which  can be too restrictive in many situations. Instead, we employ a flexible factor covariance structure \citep{OngNotSmi2018} which allows for correlation between the random variables. Our approach combines this parsimonious parametric approximation for the  posteriors of regression coefficients, with the exact conditional posterior for hyperparameters. In order to minimize the  Kullback-Leibler divergence between the true posterior and the approximation we use a stochastic gradient ascent method combined with  an efficient way to reduce the variance of the estimators \citep{KinWel2014}.

Similar to \citet{KucRajGelBle2015}, who propose an automatic differentiation VI algorithm called \texttt{advi}, we build a model agnostic estimation framework, where a large variety of different functional effects and distributions can be used. 

Though VI offers scalability, it frequently encounters challenges associated with local optima. To address this issue, we extend the newly introduced estimation approach by developing complementary robust variants. These include the capacity to subsample the likelihood at each  step and two annealing methods inspired by \citet{ManMciAbrRanBle2016}, called global annealing and local annealing or Bayesian data re-weighting. Global annealing involves introducing a global temperature parameter, which facilitates exploration of the loss function in the initial stages of optimization. In Bayesian data re-weighting, reverse temperatures are assigned to each data point, allowing for the down-weighting of outliers and thus a dynamic annealing.

Overall, our paper makes the following important contributions to SADR models:
\begin{itemize}
 \item the introduction of an approximate, scalable estimation approach  that allows for uncertainty quantification and inference,
 \item an estimation framework agnostic to the specific model specification,
 \item the introduction of global annealing and subsampling approaches, and
 \item  the introduction of Bayesian data re-weighting as  robust variants.
\end{itemize}
The rest of this paper is structured as follows: Section~\ref{sec:SADR} summarises the specification of Bayesian SADR models.  Sections~\ref{sec:VB} and \ref{sec:extend} contain details and properties of our scalable posterior estimation approach through VI, software and implementation but also extensions to robust fitting. Sections~\ref{sec:simul} and \ref{sec:appl} evaluate the performance of our approach in simulations and two diverse applications. The final Section~\ref{sec:discussion}  concludes. A Supplement gives further details on computation, simulations and applications.

\setlength{\abovedisplayskip}{0.0cm}
\setlength{\belowdisplayskip}{0.0cm}

\section{\mbox{Bayesian Structured Additive Distributional Regression}}\label{sec:SADR}

Let $\lbrace (\yvec _{i}, \xvec _{i})\rbrace_{i=1,\ldots,n}$,
denote $n$ conditionally  independent data points of observations on a response variable
$\mY \in \mathcal{Y}\subseteq \dsR ^{p}$, $p\geq 1$ and $\xvec $ the covariate
vector comprising different types of covariate information such as discrete
and continuous covariates or spatial information. We focus on $p=1$ but generalizations to multivariate responses would conceptually be straightforward. In SADR it is assumed that
the conditional distribution of $\yvec_{i}$ given $\xvec_{i}$ is specified
via a $K$-parametric distribution with density
%
%\begin{equation*}
$p_Y(\yvec _{i}\,|\,\vartheta _{i1},\ldots ,\vartheta _{iK}),$
%\tag{M1}
%\label{eq:M1}
%\end{equation*}
%
where $\varthetavec _{i}=(\vartheta _{i1},\ldots ,\vartheta _{iK})^\top$ is
a collection of $K$ observation specific scalar distributional parameters
$\vartheta _{ik}$, $k=1,\ldots ,K$. Various simpler models, such as generalized
additive or survival models are included as special cases. While in traditional mean regression models with $p_Y(\cdot\, |\,\varthetavec)$ from the exponential family the focus is on modelling $\vartheta _{i1}=\dsE (y_{i})$, and all other
$K-1$ parameters are treated as fixed or nuisance parameters, in SADR, each of the distributional parameters
$\vartheta _{ik}\equiv\vartheta _{ik}(\xvec_i)$ is related to regression effects. To maintain potential restrictions on the parameter spaces we write $h_{k}(
\eta _{ik})=\vartheta _{ik}$ and
$\eta _{ik} = h^{-1}_{k}(\vartheta _{ik})$, where $h_k$ are one-to-one response functions with inverses $h^{-1}_{k}$. %Overall, this specification does not only allow regression
%effects on the location of the response but also e.g.~the scale or skewness and thereby implicitly the complete density $p_Y$.

\subsection{Semiparametric predictors}

Each of the $K$ predictors is of structured additive form, i.e.~$
 \eta_{ik} =   \sum_{j=1}^{J_k}f_{j,k}(\xvec_i),
$
where the effects $f_{j,k}(\xvec_i)$ represent   flexible functions depending on (different subsets of) the covariate vector $\xvec_i$ that are modelled through appropriate basis function representations
\[
 f_{j,k}(\xvec_i) = \sum_{d=1}^{D_{j,k}}\beta_{j,k,d}B_{j,k,d}(\xvec_i),
\]
where $B_{l,k,d}(\xvec_i)$, $d=1,\ldots,D_{j,k}$ are appropriate basis functions and $\betavec_{j,k}=(\beta_{j,k,1},\ldots,\beta_{j,k,D_{j,k}})^\top$ are the unknown vectors of basis coefficients. Due to the linear basis representation, the vector of function evaluations $\fvec_{j,k}=(f(\xvec_{j,k,1}),\ldots,f(\xvec_{j,k,n}))^\top$ can now be written as $\fvec_{j,k}=B_{j,k}\betavec_{j,k}$ where $B_{j,k}$ is the ($n\times D_{j,k}$) design matrix arising from the evaluation of the basis functions $B_{j,k,d}(\xvec_i)$, $d=1,\ldots,D_{j,k}$, at the observed  $\xvec_1,\ldots,\xvec_n$.

{We briefly discuss some of the components $f_{j,k}$ used later:
\begin{itemize}
\item For linear effects of continuous covariates, the columns of the design matrix $B_{j,k}$ are  equal to the original covariates. For binary/categorical covariates, the basis functions represent the chosen coding, e.g.~dummy or effect coding and the design matrix then consists of the resulting dummy or effect coding columns. 
\item For a  nonlinear effect of a univariate continuous covariate $x$  splines are common choices, such as Bayesian P-splines, smoothing splines and cyclic cubic splines. %If not stated otherwise, we will use low rank approximations of smoothing-splines with 20 eigenvectors (resulting in effects of dimension $D_{j,k}=19$).
\item For nonlinear continuous interaction effects we use  tensor product P-splines or thin plate splines depending on the use case. %For tensor product P-splines. %, one can either set the same smoothing variance for the marginal bases (resulting in an isotropic effect); or distinct ones called anisotropic.
\item Spatial effects for a discrete set of geographical regions are modelled via Gaussian Markov random fields \citep[GMRFs;][]{RueHel2005} where the design matrix has entries $(i,s)$ equal to one if observation $i$ is located in region $s$ and zero otherwise.
\end{itemize}}
If not specified otherwise we use the default settings of the \texttt{mgcv} package which take an approximation of the smoothing splines and thin plate splines as defaults for univariate and multivariate nonlinear effects. %They introduce a low rank approximation through the eigenvectordecomposition of the smoothing spline basis and then select the most relevant eigenvectors as new basis \citep{Woo2011}. 
\subsection{Prior specifications}
The Bayesian treatment of SADR models is completed by making appropriate prior assumptions for the regression coefficients $\betavec_{j,k}$.  Since for many  types of effects the vector of basis coefficients $\betavec_{j,k}$ is of relatively high dimension, it is  useful to enforce specific properties such as smoothness or shrinkage. In a Bayesian formulation, this can be facilitated by assuming constrained multivariate Gaussian priors
\[	
 p(\betavec_{j,k}|\tauvec_{j,k}^2)\propto\exp\left(-\frac{1}{2}\betavec_{j,k}^\top K_{j,k}(\tauvec_{j,k}^2)\betavec_{j,k}\right)\mathds{1}\left\lbrack A_{j,k}\betavec_{j,k}=\nullvec\right\rbrack,
\]
where $K_{j,k}\equiv K_{j,k}(\tauvec_{j,k}^2)\in\dsR^{D_{j,k}\times D_{j,k}}$ denotes the prior precision matrix implementing the desired smoothness properties and the degree of smoothness is controlled by further hyperparameters $\tauvec_{j,k}^2$. The indicator function $\mathds{1}[A_{j,k}\betavec_{j,k}= \nullvec]$ is included to enforce linear constraints on the regression coefficients via the constraint matrix $A_{j,k}$. The latter is typically used to remove identifiability issues of the additive predictor (e.g.~by centering the additive components of the predictor) but can also be used to remove the partial impropriety from the prior that comes from a potential rank deficiency of  $K_{j,k}$, when $\rank(K_{j,k})=\kappa_{j,k}< D_{j,k}$.

For the examples of effect types above we make the following choices for $K_{j,k}$:
\begin{itemize}
    \item  For linear effects  flat improper priors (with $K_{j,k}=0$) are common. An alternative are informative Gaussian priors (e.g.~a Bayesian ridge regression prior with $K_{j,k}=\tau_{j,k}^2 I)$ that enforce shrinkage of the effects towards zero.
    \item For nonlinear effects we follow \citet{UmlKleZei2018} and match the prior depending on the basis function.
    \item For discrete spatial effects we use the precision matrix induced by an adjacency matrix encoding the neighbourhood relation between the regions~\citep{RueHel2005}.
\end{itemize}

Different hyperpriors for $\tauvec_{j,k}^2$ have been proposed in the literature. On the one hand, we consider the standard conjugate case of inverse gamma priors, i.e.~$\tau_{j,k}^2\sim IG(a_{j,k},b_{j,k})$, where we set  the hyperparameters to $a_{j,k}=b_{j,k}=0.001$ as a default following \citet{KleKneLan2015}. 
On the other hand, we use the scale-dependent priors of \citet{KleKne2016}, that is Weibull priors with shape equal to 0.5 and the scale parameter equal to 0.0088 as default. In Section~\ref{sec:simul}, we compare the two different prior distributions in terms of estimation performance.

To conclude we define the set of all unknown model parameters given the design matrices $B_{j,k}$ and prior precision matrices $K_{j,k}$  by 
$
\thetavec=(\betavec^\top,(\log(\tauvec^2))^\top)^\top=(\betavec^\top,(\tilde\tauvec^2)^\top)^\top\in\dsR^{p_{\theta}}$,  $p_{\theta}=\dim(\betavec)+\dim(\tilde\tauvec^2)=p_{\beta}+p_{\tilde\tau}$,%\lbrace \betavec_{j,k},\tauvec_{j,k}^2\rbrace_{j=1,\ldots,J_k; k=1,\ldots,K}.
where $\betavec=(\betavec_1,\ldots,\betavec_K)^\top,\tilde\tauvec^2=(\tilde\tauvec_1^2,\ldots,\tilde\tauvec_K^2)^\top$, $\betavec_k=(\betavec_{1,k},\ldots,\betavec_{J_k,k})^\top$, $\tilde\tauvec_k^2=(\tilde\tauvec_{1,k}^2,\ldots\tilde,\tauvec_{J_k,k})^\top.$ The transformation of the marginal variances to the real line is done for convenience to make the respective posterior distributions closer to normality, see Section~\ref{subsec:VA} for details.  In what follows, we use $p_Y(\cdot\,|\,\xvec,\thetavec)\equiv p_Y(\cdot\,|\,\xvec,\betavec)\equiv p_Y(\cdot\,|\,\varthetavec(\xvec))$ interchangeable for the likelihood.

\subsection{Model choice and variable selection}

To select a reasonable SADR model, we consider normalized quantile residuals \citep{DunSmyXXX} as a graphical device in a first step. Comparing these residuals visually  is helpful to select from a set of  candidate response distributions that yield appropriate overall fits.

In a second step, we use the  Watanabe-Akaike information criterion \citep[WAIC;][]{watanabe2010} not only to confirm the best fitting distribution but also the predictor specifications. The WAIC can be seen as an approximation to computationally expensive cross validation (CV) and it is conveniently computed from $s=1, \ldots, S$ posterior samples.  It overcomes certain limitations of the deviance information criterion (DIC) such as its dependence on the posterior mean as a specific point estimate or the potential of observing negative effective parameter counts and the assumption of posterior normality for the posterior. For a number of $S$ MCMC samples from the posterior, the WAIC is given by
$
\text{WAIC} = (-2l_{\text{WAIC}} + 2p_{\text{WAIC}}),
$
where $l_{\text{WAIC}} = \sum_{i=1}^n \log \left( \frac{1}{S} \sum_{s=1}^S p_Y(\yvec_i\,|\,\xvec_i, \thetavec^{[s]})\right)
$ is the log pointwise predictive density and
$ p_{\text{WAIC}} =   \sum_{i=1}^n \frac{1}{S-1}\sum_{s=1}^S\left\lbrace\log(p_Y(\yvec_i\,|\,\xvec_i,\thetavec^{[s]})) - \frac{1}{S}\sum_{s=1}^S\log(p_Y(\yvec_i\mid \xvec_i,\thetavec^{[s]}))\right\rbrace^2$ is the effective number of parameters \citep{VehAkiGel2017}. 

When primary interest is in predictive performance, it is advisable to conduct e.g.~CV combined with the evaluation of proper scoring rules \citep{GneTilRaf2007}. %In the regression literature, information criteria like the DIC and the WAIC are primarily used to discriminate between different types of response distributions and predictor specifications \citep[e.g.][]{KleKneLan2015}. 

\setlength{\abovedisplayskip}{0.0cm}
\setlength{\belowdisplayskip}{0.0cm}

\section{Posterior Estimation via Variational Bayes}\label{sec:VB}

To perform Bayesian inference for SADR we   consider VI methods, in which a member $q_\lambda(\thetavec)$ of some parametric family of densities with so-called variational parameters $\lambdavec$ is used to approximate
the target posterior $p(\thetavec\,|\,\yvec)\propto g(\thetavec)=p_Y(\yvec\,|\,\xvec,\thetavec)p(\thetavec)=p_Y(\yvec\,|\,\betavec)p(\betavec\,|\,\tilde{\tauvec}^2)p(\tilde{\tauvec}^2)$. We first provide a short overview of VI in general before we outline details on how VI can be successfully used to perform scalable posterior estimation in Bayesian SADR models.

\subsection{Key idea of variational inference}

Approximate Bayesian inference through VI defines an optimization problem, where the variational parameters $\lambdavec$ are tailored towards a member $q_\lambda(\thetavec)$  that is ``close to'' $p(\thetavec\,|\,\yvec)$. Proximity between  $q_\lambda(\thetavec)$  and $p(\thetavec\,|\,\yvec)$ is given by a divergence measure. For the latter, the Kullback–Leibler (KL) divergence $\mbox{KL}(q_\lambda(\thetavec)\,||\,p(\thetavec\,|\,\yvec))$ is typically employed, and it is straightforward to show that minimizing the KL divergence is  equivalent to maximizing the variational lower bound \citep[also called the evidence lower bound, or ``ELBO''; see e.g.~][]{OrmWan2010,Ble2017}
given by
\begin{align*}
		\mathcal{L}(\boldsymbol{\lambda}) =  \int q_\lambda(\thetavec)\log\left(\frac{p(\yvec\,|\,\thetavec)p(\thetavec)}{q_\lambda(\thetavec)}\right)d\thetavec.
	%	\label{eq:elbo}
\end{align*}
The ELBO takes the form of an intractable integral. Yet, recognizing that it can be written as an expectation with respect to $q_\lambda$ as
\begin{equation}\label{eq:ELBO}
\mathcal{L}(\lambdavec)=\dsE_{q_\lambda}\lbrack \log g(\thetavec)-\log q_\lambda(\thetavec)\rbrack,
\end{equation}
where $g(\thetavec)=p_Y(\yvec\mid\thetavec)p(\thetavec)$, it can be optimized using stochastic gradient ascent methods \citep[SGA;][]{Bot2010}. Given an initial value $\lambdavec^{(0)}$, SGA sequentially optimizes the ELBO through
\[\boldsymbol{\lambda}^{(t+1)}=\boldsymbol{\lambda}^{(t)}+\boldsymbol{\rho}^{(t)} \circ \nabla_\lambda\widehat{\mathcal{L}(\boldsymbol{\lambda}^{(t)})}, \quad t=1,\ldots,\]
where $\rhovec^{(t)}$ is a vector of step sizes, $\circ$ denotes the element-wise product of two vectors and $\nabla_\lambda\widehat{\mathcal{L}(\boldsymbol{\lambda}^{(t)})}$ is an unbiased estimate of the gradient of $\mathcal{L}(\lambdavec)$ at $\lambdavec=\lambdavec^{(t)}$. For appropriate step size choices this will converge to a local
optimum of $\mathcal{L}(\lambdavec)$ \citep{RobMon1951}. Adaptive step size choices are often used in practice, and we use the  automatic ADADELTA
method of \citep{Zei2012} which has proven to work well in the context of smoothing models.% \citep{OngNotSmi2018}.

In principle, unbiased estimates of the gradient can be obtained by directly differentiating \eqref{eq:ELBO} with respect to $\lambdavec$ and by approximating the expectation through simulation from $q_\lambda$. However, variance reduction methods for the gradient estimation are often needed  for fast
convergence and stability. Here, we use the  ``re-parameterization trick'' \citep{KinWel2014}, in which it is assumed that $\thetavec$ can be generated from $q_\lambda$ by first generating $\zetavec$ from  density $f_\zeta$ not depending on $\lambdavec$ and then applying a deterministic transformation $\thetavec=t(\zetavec,\lambdavec)$ to obtain $\thetavec$. In this case, \eqref{eq:ELBO} can be written as
\begin{equation}\label{eq:ELBO2}
\mathcal{L}(\lambdavec)=\dsE_{f_\zeta}\lbrack \log g(t(\zetavec,\lambdavec))-\log q_\lambda(t(\zetavec,\lambdavec))\rbrack,
\end{equation}
and differentiating under the integral sign in  \eqref{eq:ELBO2} yields the ``re-parameterization gradient''
\begin{equation}\label{eq:gradELBO2}\begin{aligned}
\nabla_\lambda\mathcal{L}(\lambdavec)&=\dsE_{f_\zeta}\lbrack \nabla_\lambda\lbrace \log g(t(\zetavec,\lambdavec))-\log q_\lambda(t(\zetavec,\lambdavec))\rbrace\rbrack\\&=\dsE_{f_\zeta}\left\lbrack \frac{\partial t(\zetavec,\lambdavec)^\top}{\partial\lambdavec}\nabla_\theta\lbrace \log g(t(\zetavec,\lambdavec))-\log q_\lambda(t(\zetavec,\lambdavec))\rbrace\right\rbrack.
\end{aligned}\end{equation}
Note that \eqref{eq:gradELBO2} employs gradient information from the log-posterior by moving $\lambdavec$ inside $g(\cdot)$ and allows fast sampling from $f_\zeta$. We detail our choice for $t$ in Section~\ref{subsubsec:GVA}. In practice, for a well-chosen VA only a few draws from $f_\zeta$ are sufficient for the SGA to converge fast. We investigate this in more detail empirically in Section~\ref{sec:simul} where we find that often even one draw suffices.

\subsection{Variational approximations for SADR}\label{subsec:VA}

We consider two VAs  for $\thetavec$. Our first choice is to approximate $p(\thetavec\,|\,\yvec)$ by a tractable variational density of a fixed form density $q_\lambda(\thetavec)=q_\lambda(\betavec,\tilde\tauvec^2)$. In doing so, proper calibration of $q_\lambda$ has the potential to be more efficient than MCMC sampling as it does not require partitioning $\thetavec$ or possibly inefficient Metropolis-Hastings (MH) steps as in  MCMC. This can be particularly appealing  for the high-dimensional subvectors $\betavec_{j,k}$ in SADR with intractable and highly dependent conditional posterior distributions.  Recognizing however, that when standard inverse gamma priors for the variances $\tauvec_{j,k}^2$ are employed,  respective full conditional distributions $p(\tau_{j,k}^2\,|\,\betavec\setminus\tau_{j,k}^2,\yvec)$ are of closed form, it is possible to define are more accurate VA independent of the specific fixed form density considered which does not need a logarithmic transformation of $\tauvec^2$. In this case, we propose to replace $q_\lambda(\thetavec)$ by
\begin{align}\label{eq:va2}
q_\lambda(\betavec,\tauvec^2)=q_\lambda^0(\betavec)\prod_{k=1}^K\prod_{j=1}^{J_k}  p(\tau_{j,k}^2\,|\,\betavec\setminus\tau_{j,k}^2,\yvec)
 \end{align}
noting that  $p(\tau_{j,k}^2\,|\,\thetavec\setminus\tau_{j,k}^2,\yvec)$ are univariate inverse gamma distributions with shape and scale given by $a_{j,k}=\tfrac{1}{2}\rank(K_{j,k})$, $b_{j,k}+\tfrac{1}{2}\betavec_{j,k}^\top  K_{j,k}\betavec_{j,k}$. 
This second choice is attractive for three reasons which we formalize below. Throughout, we make the following assumptions.
\begin{assumption}\label{assumption1}
\text{} 
\begin{enumerate}[i.)]
\item The  approximating family for the marginal of $\betavec$ of $q_\lambda^0(\betavec)$ in \eqref{eq:va2} is the same as that of the general fixed form approximation $q_\lambda(\betavec,\tilde\tauvec^2)$.  \label{A11}
\item It is feasible to generate from $p(\tauvec^2\,|\,\betavec,\yvec)$ exactly or approximately. \label{A12}
\end{enumerate}
% The  approximating family for the marginal of $\betavec$ of $q_\lambda^0(\betavec)$ in \eqref{eq:va2} is the same as that of the general fixed form approximation $q_\lambda(\betavec,\tilde\tauvec^2)$.  
% \end{assumption}
% \begin{assumption}\label{assumption2}
% It is feasible to generate from $p(\tilde\tauvec^2|\betavec,\yvec)$ exactly or approximately.  
\end{assumption}

\begin{corollary} The lower bound $\mathcal{L}^0(\lambdavec)$ of the VA at \eqref{eq:va2} of $q_\lambda^0(\betavec)$ is the same as the lower bound at \eqref{eq:ELBO} and thus does not require evaluation of the marginal posterior $p(\betavec|\yvec)$ with $\tauvec^2$ marginalized out, i.e.
%\begin{equation*}\begin{aligned}\label{eq:L0}
$   \mathcal{L}^0(\lambdavec) =\mathcal{L}(\lambdavec).$
%\end{aligned}\end{equation*}
\end{corollary}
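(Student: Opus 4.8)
The plan is to start from the general ELBO \eqref{eq:ELBO}, substitute the factorized variational density \eqref{eq:va2} together with the matching factorization of the joint target $g(\thetavec)$, and then exploit an exact cancellation that arises precisely because the $\tauvec^2$-factors in \eqref{eq:va2} are the \emph{true} full conditionals. First I would record the structural fact that the likelihood depends on $\tauvec^2$ only through $\betavec$, i.e.~$p_Y(\yvec\,|\,\thetavec)=p_Y(\yvec\,|\,\betavec)$. Consequently $p(\tau_{j,k}^2\,|\,\thetavec\setminus\tau_{j,k}^2,\yvec)\propto p(\betavec_{j,k}\,|\,\tau_{j,k}^2)p(\tau_{j,k}^2)$ does not involve $\yvec$ beyond $\betavec$, so it equals $p(\tau_{j,k}^2\,|\,\betavec_{j,k})$, and the $\tau_{j,k}^2$ are mutually independent given $\betavec$. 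This justifies writing the product of conditionals in \eqref{eq:va2} as the joint conditional $p(\tauvec^2\,|\,\betavec,\yvec)$.

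Next I would write $g(\betavec,\tauvec^2)=p_Y(\yvec\,|\,\betavec)\prod_{k,j}p(\betavec_{j,k}\,|\,\tau_{j,k}^2)p(\tau_{j,k}^2)$ and apply Bayes' rule factor by factor, $p(\betavec_{j,k}\,|\,\tau_{j,k}^2)p(\tau_{j,k}^2)=p(\tau_{j,k}^2\,|\,\betavec_{j,k})\,p(\betavec_{j,k})$, where $p(\betavec_{j,k})=\int p(\betavec_{j,k}\,|\,\tau_{j,k}^2)p(\tau_{j,k}^2)\,d\tau_{j,k}^2$ is the $\tauvec^2$-marginalized prior (which retains the constraint $\mathds{1}[A_{j,k}\betavec_{j,k}=\nullvec]$). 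Plugging \eqref{eq:va2} into the integrand $\log(g/q_\lambda)$ of \eqref{eq:ELBO}, each exact-conditional factor $p(\tau_{j,k}^2\,|\,\betavec_{j,k})$ in the numerator cancels against the identical factor in the denominator, so that the integrand reduces to $\log\!\bigl(p_Y(\yvec\,|\,\betavec)\prod_{k,j}p(\betavec_{j,k})/q_\lambda^0(\betavec)\bigr)$, which no longer depends on $\tauvec^2$.

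The final step is to integrate $\tauvec^2$ out: since the remaining integrand is $\tauvec^2$-free and $\prod_{k,j}p(\tau_{j,k}^2\,|\,\betavec_{j,k})$ integrates to one for each fixed $\betavec$, the expectation under $q_\lambda$ collapses to an expectation under the marginal $q_\lambda^0(\betavec)$ alone. This yields $\mathcal{L}(\lambdavec)=\dsE_{q_\lambda^0}\bigl[\log\!\bigl(p_Y(\yvec\,|\,\betavec)p(\betavec)\bigr)-\log q_\lambda^0(\betavec)\bigr]=\mathcal{L}^0(\lambdavec)$, i.e.~exactly the ELBO of the collapsed model $p(\betavec\,|\,\yvec)\propto p_Y(\yvec\,|\,\betavec)p(\betavec)$ with approximation $q_\lambda^0$. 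This is the asserted identity; moreover, the left-hand expression only ever evaluates the tractable joint $g$ and the closed-form inverse-gamma conditionals, so the bound can be computed and optimized without ever forming the marginal prior $p(\betavec)$ or the marginal posterior $p(\betavec\,|\,\yvec)$, which is the computational point of the corollary.

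I expect the main obstacle to be bookkeeping rather than analysis: one must check that the factorwise Bayes' rule cancellation is simultaneously valid across all $(j,k)$, which hinges on the two structural facts above (conditional independence of the $\tau_{j,k}^2$ given $\betavec$, and $\yvec$-independence of the likelihood from $\tauvec^2$), and that interchanging the order of integration to integrate out $\tauvec^2$ first is legitimate. The latter is unproblematic here because $q_\lambda^0$ and each $p(\tau_{j,k}^2\,|\,\betavec_{j,k})$ are proper densities and Assumption~\ref{assumption1} guarantees the approximating family and the exact conditionals are well defined, so Fubini applies.
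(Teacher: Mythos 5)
Your proposal is correct and follows essentially the same route as the paper's proof: rewrite $p(\betavec\,|\,\tauvec^2)p(\tauvec^2)$ via Bayes' rule as $p(\tauvec^2\,|\,\betavec)p(\betavec)$, use the fact that $p(\tauvec^2\,|\,\betavec,\yvec)=p(\tauvec^2\,|\,\betavec)$ so that the exact-conditional factor of \eqref{eq:va2} cancels in $\log(g/q_\lambda)$, and then collapse the expectation to one under $q_\lambda^0(\betavec)$ because the remaining integrand is free of $\tauvec^2$. Your version merely carries out this cancellation factorwise over $(j,k)$ and makes the structural justifications (conditional independence of the $\tau_{j,k}^2$ given $\betavec$, and the likelihood's $\yvec$-independence from $\tauvec^2$) explicit, which the paper leaves implicit.
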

\begin{proof}
\begin{equation*}\begin{aligned}\label{eq:L}
   \mathcal{L}(\lambdavec) &=\dsE_{q_\lambda}\left\lbrack\log p(\yvec\,|\,\betavec,\tauvec^2)+\log p(\tauvec^2)-\log q_\lambda^0(\betavec)-\log p(\tauvec^2\,|\,\betavec,\yvec)\right\rbrack\\
   &=\dsE_{q_\lambda}\lbrack\log p(\yvec\,|\,\betavec)+\log p(\betavec)+\log p(\tauvec^2\,|\,\betavec)-\log p(\tauvec^2)+\log p(\tauvec^2)-\log q_\lambda^0(\betavec)-\log p(\tauvec^2\,|\,\betavec) \rbrack\\
   &=\dsE_{q_\lambda^0}\left\lbrack\log p(\yvec\,|\,\betavec)+\log p(\betavec)-\log q_\lambda^0(\betavec)\right\rbrack=\mathcal{L}^0(\lambdavec).
\end{aligned}\end{equation*}
\end{proof}
\begin{corollary} Let $\thetavec$ have a VA of the form \eqref{eq:va2}. Let $\zetavec=((\zetavec_\beta)^\top,(\tauvec^2)^\top)^\top$, where $\zetavec_\beta$ is such that $\betavec=t^0(\zetavec_\beta,\lambdavec)$. Assume the density $f_\zeta(\zetavec)=f_{\zeta_\beta}(\zetavec_\beta)p(\tauvec^2\,|\,t^0(\zetavec_\beta,\lambdavec),\yvec)$ is such that it does not depend on $\lambdavec$ and such that there exists a transformation $t$ given by $\thetavec=t(\zetavec,\lambdavec)=(t^0(\zetavec_\beta,\lambdavec)^\top,(\tauvec^2)^\top)^\top$, where $\betavec=t^0(\zetavec^0,\lambdavec)$. Then, the re-parameterization gradient at \eqref{eq:gradELBO2} simplifies to
\begin{equation}\begin{aligned}\label{eq:graddL0}
 \nabla_\lambda\mathcal{L}(\lambdavec)=\dsE_{f_\zeta}\left\lbrack \frac{\partial \betavec^\top}{\partial\lambdavec}\nabla_\beta\left\lbrace \log g(t(\zetavec,\lambdavec))-\log q_\lambda^0(\betavec)-\log p(\tauvec^2\,|\,\betavec,\yvec) \right\rbrace\right\rbrack.
\end{aligned}\end{equation}
\end{corollary}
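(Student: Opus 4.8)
The plan is to specialize the general re-parameterization gradient \eqref{eq:gradELBO2} to the transformation $\thetavec=t(\zetavec,\lambdavec)=(t^0(\zetavec_\beta,\lambdavec)^\top,(\tauvec^2)^\top)^\top$ and to exploit two structural features: that the $\tauvec^2$-coordinates are passed through unchanged, and that the variational density factorizes. First I would compute the Jacobian $\partial t(\zetavec,\lambdavec)^\top/\partial\lambdavec$. Since the $\tauvec^2$-block of $\zetavec$ maps to $\tauvec^2$ by the identity, it contributes nothing to the $\lambdavec$-derivative, so $\partial t^\top/\partial\lambdavec=(\partial\betavec^\top/\partial\lambdavec,\ \mathbf{0})$ and the full $\nabla_\theta$ in \eqref{eq:gradELBO2} collapses to $\partial\betavec^\top/\partial\lambdavec\,\nabla_\beta$, annihilating the $\nabla_{\tauvec^2}$ part.

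The second step is to rewrite $\log q_\lambda$. Because each full conditional $p(\tau_{j,k}^2\mid\betavec\setminus\tau_{j,k}^2,\yvec)$ is the inverse-gamma law identified just after \eqref{eq:va2}, and these are mutually independent given $\betavec$ (the priors on the $\tau_{j,k}^2$ are independent and $p(\betavec\mid\tauvec^2)$ factorizes), the product in \eqref{eq:va2} equals $p(\tauvec^2\mid\betavec,\yvec)$, so that $\log q_\lambda(\betavec,\tauvec^2)=\log q_\lambda^0(\betavec)+\log p(\tauvec^2\mid\betavec,\yvec)$. Substituting this split together with the collapsed Jacobian into \eqref{eq:gradELBO2} yields precisely the bracketed integrand of \eqref{eq:graddL0}.

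I expect the main obstacle to be that \eqref{eq:gradELBO2} presumes the base density $f_\zeta$ to be free of $\lambdavec$, whereas here $f_\zeta(\zetavec)=f_{\zeta_\beta}(\zetavec_\beta)p(\tauvec^2\mid t^0(\zetavec_\beta,\lambdavec),\yvec)$ depends on $\lambdavec$ through the conditioning on $\betavec=t^0(\zetavec_\beta,\lambdavec)$; the standing assumption only guarantees that we can draw from it. I would resolve this by differentiating the inner $\tauvec^2$-integral explicitly and showing the extra term vanishes: the $\lambdavec$-derivative of the sampling measure produces a score-type term proportional to $\dsE_{p(\tauvec^2\mid\betavec,\yvec)}[\,\nabla_\beta\log p(\tauvec^2\mid\betavec,\yvec)\,]$, which is zero because $\int p(\tauvec^2\mid\betavec,\yvec)\,d\tauvec^2=1$ for every $\betavec$ and hence has vanishing $\betavec$-gradient.

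A cleaner alternative route, which I would prefer to present, bypasses this subtlety by invoking the first Corollary: since $\mathcal{L}(\lambdavec)=\mathcal{L}^0(\lambdavec)$ depends on $\lambdavec$ only through $\betavec$, re-parameterizing $\betavec=t^0(\zetavec_\beta,\lambdavec)$ with the genuinely $\lambdavec$-free density $f_{\zeta_\beta}$ gives a clean gradient whose integrand is $\partial\betavec^\top/\partial\lambdavec\,\nabla_\beta\{\log p_Y(\yvec\mid\betavec)+\log p(\betavec)-\log q_\lambda^0(\betavec)\}$. The Bayes identity $\log g(\thetavec)-\log p(\tauvec^2\mid\betavec,\yvec)=\log p_Y(\yvec\mid\betavec)+\log p(\betavec)$, obtained from $p(\tauvec^2\mid\betavec,\yvec)=p(\betavec\mid\tauvec^2)p(\tauvec^2)/p(\betavec)$ and cancellation of the $\tauvec^2$-terms, shows this integrand coincides with the bracket in \eqref{eq:graddL0} and, being free of $\tauvec^2$, is unchanged when the expectation is taken over the full $f_\zeta$ rather than $f_{\zeta_\beta}$.
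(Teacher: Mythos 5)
Your proposal is correct, and your primary route---specializing the general re-parameterization gradient \eqref{eq:gradELBO2} to the block transformation $t$---is exactly the paper's approach: its entire proof reads ``Follows directly from \eqref{eq:gradELBO2}.'' What you supply is the detail the paper suppresses. In particular, you are right to flag that the stated assumption that $f_\zeta$ ``does not depend on $\lambdavec$'' sits uneasily with the explicit form $f_{\zeta_\beta}(\zetavec_\beta)\,p(\tauvec^2\,|\,t^0(\zetavec_\beta,\lambdavec),\yvec)$, and your resolution via the vanishing score $\dsE_{p(\tauvec^2|\betavec,\yvec)}\left\lbrack\nabla_\beta\log p(\tauvec^2\,|\,\betavec,\yvec)\right\rbrack=\nullvec$ is the justification the paper omits. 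One small point of care: the $\lambdavec$-derivative of the sampling measure multiplies the \emph{entire} bracket, so what appears is $\dsE\lbrack\text{bracket}\times\text{score}\rbrack$ rather than $\text{bracket}\times\dsE\lbrack\text{score}\rbrack$; you therefore need the Bayes identity $\log g(\thetavec)-\log p(\tauvec^2\,|\,\betavec,\yvec)=\log p_Y(\yvec\,|\,\betavec)+\log p(\betavec)$ (which you do establish in your final paragraph) to see that the bracket is free of $\tauvec^2$ and can be pulled out of the inner integral before invoking the zero-mean score. Your alternative route through the first Corollary, using $\mathcal{L}=\mathcal{L}^0$ so that the objective depends on $\lambdavec$ only through $\betavec=t^0(\zetavec_\beta,\lambdavec)$ with a genuinely $\lambdavec$-free base density $f_{\zeta_\beta}$, bypasses this subtlety entirely and is arguably the cleaner argument; the paper does not take it, but it buys a derivation in which no differentiation of the sampling measure ever arises.
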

\begin{proof}
Follows directly from \eqref{eq:gradELBO2}. 
\end{proof}
\begin{remark}
An  important difference to \citet{LoaSmiNotDan2022} is the hierarchy between the two components of $\thetavec$. While in \citet{LoaSmiNotDan2022} the latent variables have a prior conditional on the ``global'' parameters, in our case, the prior of the regression coefficients $\betavec$ is conditional on $\tauvec^2$. Hence, we need the conditional posterior of $\tauvec^2$ and thus consider $\mathcal{L}$ rather than $\mathcal{L}^0$ for optimization. 
\end{remark}

\begin{corollary} Consider a VA of the form \eqref{eq:va2} and a second VA $\tilde q_{\tilde\lambda}(\betavec,\tilde\tauvec^2)$ with the same marginal approximation for $\betavec$. Write $\tilde q_{\tilde\lambda}(\betavec,\tilde\tauvec^2)=q_{\tilde\lambda_{1}}^0(\betavec)q_{\tilde\lambda_2}(\tilde\tauvec^2\,|\,\betavec)$ for this second VA with variational parameters $\tilde\lambdavec=(\tilde\lambdavec_1,\tilde\lambdavec_2)$ and let $\tilde\lambdavec^\ast=(\tilde\lambdavec_1^\ast,\tilde\lambdavec_2^\ast)$ be the optimal variational parameters. Then, our VA at \eqref{eq:va2} with optimal parameter vector $\lambdavec^\ast$ is more accurate than $\tilde q_{\tilde\lambda}(\betavec,\tilde\tauvec^2)$ in the sense of having a lower KL divergence:
\[
\mbox{KL}(q_{\lambda^\ast}^0(\betavec)p(\tauvec^2\,|\,\betavec)\,||\,p(\betavec,\tauvec^2\,|\,\yvec))\leq \mbox{KL}(q_{\tilde\lambda_1^\ast}^0(\betavec)q_{\tilde\lambda_2^\ast}(\tilde\tauvec^2)\,||\,p(\betavec,\tilde\tauvec^2\,|\,\yvec)).
\]
\end{corollary}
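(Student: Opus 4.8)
The plan is to exploit the fact that the $\tauvec^2$-component of the VA at \eqref{eq:va2} is \emph{exactly} the conditional posterior $p(\tauvec^2\,|\,\betavec,\yvec)$, so that the corresponding conditional KL term vanishes, while for the second VA the analogous term is only guaranteed to be nonnegative. The central tool is the chain-rule (factorization) identity for the KL divergence of a joint density: for any $\betavec$-marginal $q(\betavec)$ and any conditional $r(\tauvec^2\,|\,\betavec)$, writing the target as $p(\betavec,\tauvec^2\,|\,\yvec)=p(\betavec\,|\,\yvec)p(\tauvec^2\,|\,\betavec,\yvec)$, one has
\begin{equation*}
\mbox{KL}(q(\betavec)r(\tauvec^2\,|\,\betavec)\,||\,p(\betavec,\tauvec^2\,|\,\yvec))=\mbox{KL}(q(\betavec)\,||\,p(\betavec\,|\,\yvec))+\dsE_{q}\lbrack \mbox{KL}(r(\tauvec^2\,|\,\betavec)\,||\,p(\tauvec^2\,|\,\betavec,\yvec))\rbrack.
\end{equation*}
I would establish this by expanding the defining integral of the joint KL and splitting the logarithm along the factorization; this is routine.

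Applying the identity to our VA at \eqref{eq:va2}, the conditional factor is $r(\tauvec^2\,|\,\betavec)=p(\tauvec^2\,|\,\betavec,\yvec)$, so the expected conditional KL term is the KL of a density against itself and vanishes identically. Hence the joint KL collapses to $\mbox{KL}(q_\lambda^0(\betavec)\,||\,p(\betavec\,|\,\yvec))$, which depends on the variational parameters only through the $\betavec$-marginal. Since minimizing the joint KL is equivalent to maximizing the ELBO, and by Corollary~1 the ELBO of \eqref{eq:va2} depends on $q_\lambda^0$ only through $\betavec$, the optimizer $\lambdavec^\ast$ is precisely the minimizer of $\mbox{KL}(q_\lambda^0(\betavec)\,||\,p(\betavec\,|\,\yvec))$ over the common $\betavec$-family.

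For the second VA the conditional factor $q_{\tilde\lambda_2}(\tilde\tauvec^2\,|\,\betavec)$ need not equal the conditional posterior, so its expected conditional KL term is only nonnegative. Dropping it yields the lower bound
\begin{equation*}
\mbox{KL}(q_{\tilde\lambda_1^\ast}^0(\betavec)q_{\tilde\lambda_2^\ast}(\tilde\tauvec^2\,|\,\betavec)\,||\,p(\betavec,\tilde\tauvec^2\,|\,\yvec))\geq \mbox{KL}(q_{\tilde\lambda_1^\ast}^0(\betavec)\,||\,p(\betavec\,|\,\yvec)).
\end{equation*}
By Assumption~\ref{assumption1}(\ref{A11}) both VAs share the same $\betavec$-marginal family, so the right-hand side is at least the minimum of that marginal KL over the family, which by the previous paragraph equals $\mbox{KL}(q_{\lambda^\ast}^0(\betavec)\,||\,p(\betavec\,|\,\yvec))$. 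Chaining these (in)equalities gives the assertion.

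The main obstacle is conceptual rather than computational: one must argue cleanly that the optimizer $\lambdavec^\ast$ of the \emph{joint} problem for \eqref{eq:va2} coincides with the minimizer of the \emph{marginal} $\betavec$-KL (which the vanishing conditional term delivers, since the joint and marginal objectives agree as functions of $\lambdavec$), and that the best possible $\tilde\lambdavec_1^\ast$ for the second VA cannot undercut that marginal minimum. Care is also needed to read the corollary's shorthand $p(\tauvec^2\,|\,\betavec)$ and $q_{\tilde\lambda_2^\ast}(\tilde\tauvec^2)$ as the conditional posterior $p(\tauvec^2\,|\,\betavec,\yvec)$ and the conditional approximation $q_{\tilde\lambda_2^\ast}(\tilde\tauvec^2\,|\,\betavec)$, respectively, so that the factorizations line up.
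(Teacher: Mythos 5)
Your proposal is correct and follows essentially the same route as the paper: the chain-rule decomposition of the joint KL into a marginal $\betavec$-term plus an expected conditional term, the vanishing of the latter for the VA at \eqref{eq:va2}, its nonnegativity for the competing VA, and Assumption~\ref{assumption1}(i) to compare the marginal terms over the common family. Your write-up is in fact somewhat more explicit than the paper's (in identifying $\lambdavec^\ast$ as the minimizer of the marginal KL and in disambiguating the corollary's notation), but the underlying argument is the same.
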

\begin{proof} For any VA of the form $q(\betavec,\tilde\tauvec^2)=q(\betavec)q(\tilde\tauvec^2\,|\,\betavec)$ of $p(\betavec,\tilde\tauvec^2\,|\,\yvec)$
\begin{equation}\begin{aligned}
 \small \mbox{KL}(q(\betavec,\tilde\tauvec^2)\,||\,p(\betavec,\tilde\tauvec^2\,|\,\yvec)) = \mbox{KL}(q(\betavec)\,||\,p(\betavec\,|\,\yvec)) + \int \mbox{KL}(q(\tilde\tauvec^2\,|\,\betavec)\,||\,p(\tilde\tauvec^2\,|\,\betavec,\yvec))q(\betavec)d\betavec. \label{eq:Leffic}
\end{aligned}\end{equation}
If  \eqref{eq:va2} is used, the second term on the right-hand side of  \eqref{eq:Leffic} is  zero as $q_{\tilde\lambda_2^\ast}(\tilde\tauvec^2|\betavec)=p(\tilde\tauvec^2\,|\,\betavec)$.  Furthermore, for any other VA of the form at $\tilde q_{\tilde\lambda}(\betavec,\tilde\tauvec^2)=q_{\tilde\lambda_{1}}^0(\betavec)q_{\tilde\lambda_2}(\tilde\tauvec^2\,|\,\betavec)$, by Assumption \ref{assumption1}i.) the approximation to the marginal posterior
distribution of $\betavec$ cannot improve on the KL-optimal approximation within the chosen family $q^0_{\tilde{\lambda}_1}(\betavec)$
 for approximation \eqref{eq:va2}. %This
%proves the result.
% \begin{equation}\begin{aligned}
%  \small \mbox{KL}(q_{\tilde\lambda_1^\ast}^0(\betavec)q_{\tilde\lambda_2^\ast}^0(\tilde\tauvec^2\,|\,\betavec)||p(\betavec,\tilde\tauvec^2\,|\,\yvec)) = \mbox{KL}(q_{\tilde\lambda_1^\ast}^0(\betavec)||p(\betavec\,|\,\yvec)) + \int \mbox{KL}(q_{\tilde\lambda_2^\ast}(\tilde\tauvec^2\,|\,\betavec)||p(\tilde\tauvec^2\,|\,\betavec,\yvec))q(\betavec)d\betavec \label{eq:Leffic}
% \end{aligned}\end{equation}
% If  \eqref{eq:va2} is used on the left hand side of \eqref{eq:Leffic}, then, on the right hand side,  $q(\betavec) = q_{\lambda_1^\ast}^0(\betavec)$, while the second part of  \eqref{eq:Leffic} is  zero as $q_{\tilde\lambda_2^\ast}(\tilde\tauvec^2|\betavec)=p(\tilde\tauvec^2\,|\,\betavec)$. Furthermore, for any other VA of the form at $\tilde q_{\tilde\lambda}(\betavec,\tilde\tauvec^2)=q_{\tilde\lambda_{1}}^0(\betavec)q_{\tilde\lambda_2}(\tilde\tauvec^2\,|\,\betavec)$, the approximation to the marginal posterior
% distribution of $\betavec$ cannot improve on the KL-optimal approximation within the chosen family $q_\lambda^0(\betavec)$
%  for approximation \eqref{eq:va2}. This
% proves the result.
\end{proof}

\subsubsection{Gaussian variational approximations}\label{subsubsec:GVA}
Gaussian distributions as VA are  popular, often providing very accurate approximations to at least posterior means/modes. As to the best of our knowledge, we are the first to suggest VI for Bayesian SADR and our approach is also based on Gaussian VAs for either $q_\lambda^0(\betavec)$ or $q_{\lambda}(\betavec,\tilde\tauvec^2)$. However, despite their tractability, Gaussian VAs can be computationally burdensome or difficult to estimate when the dimension of $\thetavec$ is high or when an unrestricted covariance matrix is employed because the number of elements in the covariance matrix
increases quadratically with the parameter dimension. To overcome this burden in our models where $p_\theta$ is typically large, we follow \citet{OngNotSmi2018} and employ a factor covariance
structure to reduce the number of variational parameters. Specifically, depending on the choices for the VA we made in the previous subsection, we assume that $q_\lambda(\thetavec)$ or $q_\lambda^0(\betavec)$ are of the form $\phi_{p_{\bullet}}(\bullet;\muvec,B^\top B+D^2)$, $\bullet\in\lbrace \thetavec,\betavec\rbrace$, where $B$ is an $p_{\bullet}\times k$ matrix, with far fewer columns than rows, $k\ll p_{\bullet}$, and zeros above the diagonal. Furthermore, $D=\mbox{diag}(d_1,\ldots,d_{p_{\bullet}})$ is a diagonal matrix and $p_{p_{\bullet}}(\cdot;\avec,A)$ is the density of a $p_{\bullet}-$dimensional Gaussian distribution with mean $\avec$ and covariance $\Lambda$. Hence, $\lambdavec=(\muvec^\top,\mbox{vech}(B)^\top,\dvec^\top)^\top$, where $\dvec=(d_1,\ldots,d_{p_{\bullet}})^\top$ and $\mbox{vech}$ is the vectorization of the lower triangular elements of $B$ excluding the diagonal, i.e.~the non-zero elements. This distribution allows the usage of the re-parameterization trick as we can write $t(\zetavec,\lambdavec)=\muvec+B\xivec+(d_1,\ldots,d_{p_{\bullet}})^\top\circ\varepsilonvec$, where $\zetavec=(\xivec^\top,\varepsilonvec^\top)^\top\sim N(\nullvec,I_{(k+p_{\bullet})\times(k+p_{\bullet})})$. \citet{OngNotSmi2018} provide   analytical expressions for the gradients of the log-VA that allow for a fast computation of unbiased gradients. Algorithm~\ref{alg:VI} in Section~\ref{app:algos} of the Supplement summarizes the estimation procedure. 
 %When
%$k = 0$ this approximation reduces to a product of independent Gaussians. %Empirically, we find in Section~\ref{sec:simul}, that $k$ depends on the complexity of the SADR model.  Algorithm ~\ref{alg:VI} in the Supplement~summarizes the full estimation approach.

\subsection{Subsampling}\label{subsec:sub}

Our VI approach can readily be combined with subsampling  which results in doubly stochastic VI \citep{SalKno2013} and which has two important advantages. First, it can help avoid getting stuck in local optima; and second, it can increase efficiency especially for large data sets due to the fact that in each step we do not use every data point  to evaluate the likelihood and gradient estimate but only a subsample of size $n_{\mbox{\scriptsize{sub}}}<n$ \citep{ZhaBueKjwMan2018}.  The implementation evaluates the likelihood only on the subsample and adds a re-weighting factor $\frac{n}{n_{\mbox{\scriptsize{sub}}}}$, which also leads to a weighted derivative \citep[see][for details]{HofBleWanPai2013}. The adjusted algorithm can be found in the Supplement~Algorithm \ref{alg:subsampling}.

\subsection{Global annealing}\label{subsec:glob}
\citet{AbrManRanBle2014} introduce a global annealing procedure called annealed variational inference (AVI). AVI introduces a global temperature to the likelihood to allow for appropriate weighting of the data in computing the posterior distribution.  The basic intuition of the adjusted likelihood is to first down-weight the importance of the data with a reasonable large $T$ and then to  sequentially decrease $T$ during the VI algorithm to force the VA to explain the data \citep{ManMciAbrRanBle2016}. A well-designed annealing schedule can thereby contribute to faster convergence and to avoid local optima in complex models, similar to subsampling.

For our Bayesian SADR, the  extended conditional joint distribution  reads as
\begin{align*}
	p(\betavec,\tilde\tauvec^2,\yvec\mid\xvec,T)&=\frac{p_Y(\yvec\,|\,\xvec,\betavec)^{1/T}p(\betavec\,|\,\tilde\tauvec^2)p(\tilde\tauvec^2)}{C(T\,|\,\xvec) }\propto p_Y(\yvec\,|\,\xvec,\betavec)^{1/T}p(\betavec\,|\,\tilde\tauvec^2)p(\tilde\tauvec^2)
\end{align*}
where $C(T\,|\,\xvec)=\displaystyle\int p_Y(\yvec|\xvec,\betavec)^{1/T}p(\betavec\,|\,\tilde\tauvec^2)p(\tilde\tauvec^2)d\yvec,\d\betavec\, d\tilde\tauvec^2$ is the  normalizing constant. %This term is constant with respect to $(\yvec,\betavec,\tilde\tauvec^2)$ with respect to  that does not need to be considered due to the fact that constant terms do not affect the variational objective \citep{AbrManRanBle2014}. 
The resulting lower bound, called annealed ELBO, can be derived in our case as 
\[	\mathcal{L}(\lambdavec\,|\,T)  = \mathbb{E}_{q_\lambda(\thetavec)}\left[\frac{1}{T}\log( p_Y(\yvec\,|\,\xvec,\betavec))+\log(p(\betavec\,|\,\tilde\tauvec^2))+\log(p(\tilde\tauvec^2))-\log(q_\lambda(\betavec,\tilde\tauvec^2))\right] \] 
and similar in the case of conjugacy in the prior for $\tilde\tauvec^2$.
For $T$=1, the standard ELBO is recovered. The re-parameterization trick and the resulting derivatives for the annealed ELBO are straightforward to compute as only the likelihood needs to be re-weighted by the reversed temperature.  

The annealing schedule has a strong influence on the performance of the annealing procedure.  Following \citet{AbrManRanBle2014}, we use a linear decrease schedule for $T$, and update it every 100 iterations until it reaches $T$=1 at a pre-chosen iteration. 
 For the starting  value $T^{(0)}$ we compare $T^{(0)}\in\lbrace2,5,20,30\rbrace$, similar to \citet{AbrManRanBle2014}. Empirical evidence for the benefit of using global annealing and the corresponding algorithm can be found in the Supplement, Section~\ref{app:defaults} and Algorithm \ref{alg:anneal}, respectively.

\subsection{Further computational details}
VI can be  sensitive to initialization \citep{AltRanBle2018} and carefully chosen starting values can help to prevent getting stuck in local modes of the ELBO or lead to faster convergence.  For $\betavec$ we adapt the MCMC initialization of \citet{UmlKleZei2018} by using the penalized maximum likelihood estimators with fixed $\boldsymbol{\tau}^2$ which is equivalent to the estimation of the posterior mode of $\betavec$, while additionally minimizing information criteria such as the AIC/BIC are used for the smoothing parameters. While MCMC sampling is highly sensitive to initialization, our simulations show that our newly introduced VI algorithm performs similarly if we initialize $\boldsymbol{\beta}^{(0)}=0$.% and set all non-zero values (diagonal and lower triangle matrix) equal to 0.001, and $\mathbf{d}=0.1\circ \mathbf{1}$. \\ 

In addition to ADADELTA, we use an adapted stopping criterion  of the stochastic optimization based on the criterion introduced by \citet{YaoVehSim2018}. Here, the algorithm ends when the change of the estimated moving average of the lower bound did not improve by more than $10^{-4}$ in the last 1000 iterations. Due to critics concerning the instability of this criterion we replace the average with the median. To arrive at a point estimate $\hat\lambdavec$ for the variational parameters $\lambdavec$, we use the last 1000 iterations  to calculate a final point estimate $\hat{\boldsymbol{\lambda}}=\frac{1}{1000}\sum_{\text{iter}=\text{niter}-1000}^{\text{niter}} \boldsymbol{\lambda}_{\text{iter}}$. Uncertainty about model parameters and functions of these can be directly by sampling from full variational density  $\mathcal{N}(\hat{\boldsymbol{\mu}},\hat{\Sigma})$.

\setlength{\abovedisplayskip}{0.0cm}
\setlength{\belowdisplayskip}{0.0cm}
\section{Robust Fitting}\label{sec:extend}
In Sections \ref{subsec:sub} and \ref{subsec:glob}, we discussed adaptations to the likelihood function that introduce a degree of robustness against local optima via subsampling and the introduction of a global temperature.  In this section, we take a different route using the concept of local annealing, initially introduced by \citet{ManMciAbrRanBle2016}. This technique involves assigning distinct temperatures to each data point and treating them as individual random variables within the VI framework. \citet{WanKucBle2017} adapt the idea of local annealing   and present a data driven approach of re-weighting the data by assigning each data point a weight in (0,1) which is treated as an additional model parameter. 

The process of re-weighting data points can be seen equivalent to a data driven detection of data outliers. The influence of extreme values is reduced so that they do not influence the predictions too strongly. This idea is also refereed to as Bayesian data re-weighting. Detecting outliers in complex models is often not straightforward. That is why building a data driven probabilistic model which objectively detects outliers is very helpful in automatically reducing their impact.
 The goal is to down-weight extreme values to make estimations more robust. It is important to note that the parameter space of the optimization problem will be increased by the number of observations in the data set which makes it difficult with exact Bayesian estimation but tractable with VI.

%\cite{AebCanMarRad2021} introduce a robust log-likelihood in the context of the frequentist estimation scheme of SADR models, where the contribution of observations with low log-likelihood values is limited.  We adapt the ideas of \citet{ManMciAbrRanBle2016,WanKucBle2017}  to build an equivalent model in an approximate Bayesian framework.

\subsection{Model specification} Following \citet{WanKucBle2017}, the likelihood $p_Y(\yvec\,|\,\xvec,\betavec)$ is augmented by a weight vector $\wvec=(w_1,\ldots,w_n)^\top\in(0,1)^n$. In the Bayesian framework, we treat $\wvec$ as random and assign independent beta priors to each $w_i\in (0,1)$ as priors, $
	p(\boldsymbol{w}) =\prod_{i=1}^n p(w_i)\propto \prod_{i=1}^n w_i^{a_w-1}(1-w_i)^{b_w-1}
$. Depending on the values for the shared parameters $a_w$ and $b_w$, both rather uninformative but also informative priors can be recovered. The latter is  important here as most of the observations should be included in the model with weights close to one, while only   a few weights (those belonging to outlier observations) should be small (thus down-weighting the corresponding $y_i$). To see which prior distribution performs best we conducted a small simulation study presented in the Supplement~Section~\ref{app:simrobust} in the Supplement.% from which we choose $a_w=0.2, b_w=0.01$ as our default option. %perform best. These hyperparameters have been also used in one of the applications in \citet{WanKucBle2017}.

\subsection{Augmented posterior}  The aim is  to target the augmented posterior
 \[p(\betavec,\wvec,\tilde\tauvec^2\,|\,\yvec)\propto \prod_{i=1}^n p_Y(y_i\,|\,\xvec_i,\betavec)^{w_i}\, p(\betavec\,|\,\tilde\tauvec^2)p(\tilde\tauvec^2)p(\wvec).\]
 The reweighted density integrates to one when the normalizing factor $\int p(\betavec\,|\,\tilde\tauvec^2)p(\tilde\tauvec^2)p(\wvec)$ $\prod_{i=1}^n p_Y(y_i\,|\,\xvec_i,\betavec)^{w_i} d\betavec d\tilde\tauvec^2 d\wvec$ is finite. 
For likelihoods of the exponential family, this reweighted density always integrates to one \citep{BerSmi2009}. To ensure a proper posteriors also for SADR models, we assume that conditions (A.1), (A.2), (B.1)--(B.3) and (C.1)--(C.4) of \citet{KleKneLan2015} hold. These conditions are sufficient to obtain a proper posterior \citep[see Theorem 1 of][for details]{KleKneLan2015}.

For posterior estimation using VI, we  transform the weights $\wvec$ to the real line via a logit transformation, i.e.~by setting $\widetilde{w}_i = \text{logit}(w_i)$.

\subsection{VI to the augmented posterior:} 
 Consider the augmented model parameter vector $\boldsymbol{\phi} =  (\thetavec, \tilde{\boldsymbol{w}})^\top$, where $(\widetilde w_1,\ldots,\widetilde w_n)^\top$. 
  Since $\dim(\wvec)=n$, computational cost can be high in particular when $n$ is large. To overcome this computational challenge and to make our VI approach efficient, we assume that the VA is the product of a Gaussian density with factor covariance structure for the parameters of $\thetavec$ times a Gaussian density with a diagonal covariance  for the weights:
  \begin{align*}
	q(\thetavec,\widetilde{\boldsymbol{w}}\,|\,\overbrace{\boldsymbol{\lambda_\theta},\lambdavec_{\widetilde{w}}}^{=:\lambdavec}) & =q_{\lambda_\theta}(\boldsymbol{\theta})\times  q_{\lambdavec_{\widetilde{w}}}(\widetilde{\boldsymbol{w}})=\mathcal{N}(\thetavec;\muvec_\theta,\overbrace{B_\theta B_\theta^\top+D_\theta^2}^{\Sigma_\theta})\times \mathcal{N}(\widetilde{\boldsymbol{w}};\muvec_{\widetilde{w}},\overbrace{\exp(\boldsymbol{\varrho})^2I_N}^{\Sigma_{\widetilde{w}}})\\
	& = \mathcal{N}\left(\begin{bmatrix} \muvec_\theta \\ \muvec_{\widetilde{w}} \end{bmatrix} ,\begin{bmatrix}\Sigma_\theta & 0 \\ 0 & \Sigma_{\widetilde{w}} \end{bmatrix}\right).
\end{align*}
This choice implies the assumption of posterior independence between $\widetilde\wvec$ and $\thetavec$. While this seems a rather strong assumption at first sight, we found it to work well in practice. 

The annealed ELBO to the augmented posterior is given by
\begin{align}
	\mathcal{L}(\lambdavec_\theta,\lambdavec_{\widetilde{w}}) & = \mathbb{E}_{q_{\lambdavec}(\betavec,\tilde{\tauvec}^2,\widetilde{\wvec})}[\underbrace{\log(p(\yvec,\boldsymbol{\beta},\tilde{\tauvec}^2,\widetilde{\wvec})))}_{=:h(\boldsymbol{\phi})}-\log(q_{\lambda_\theta}(\boldsymbol{\theta})) -\log(q_{\lambda_{\widetilde{w}}}(\widetilde{\wvec}))].\label{eq:lowerbound2}
\end{align}
To employ the re-parameterization trick to this ELBO, we need the following proposition.
 \begin{proposition}
 Let $\muvec^{\text{total}}=\left(\muvec_\theta^\top,\muvec_{\widetilde{w}}^\top\right)^\top\in\dsR^{p_\bullet+ n}$, $B^{\text{total}}=\left(B^\top, 0_{k\times n}^\top \right)^\top\in\dsR^{(p_\bullet+n)\times k}$, where $0_{k\times n}$ is a ${k\times n}$  matrix of zeros \footnote{We use sparse matrices which ensure efficiency.}, $\dvec^{\text{total}} = \left(\dvec^\top,\exp(\boldsymbol{\varrho}^\top) \right)^\top\in\dsR^{p_\bullet+ n}$, $\boldsymbol{\epsilon}^{\text{total}}=\left(\boldsymbol{\epsilon}_\theta^\top,\boldsymbol{\epsilon}_{\widetilde{w}}^\top\right)^\top\in\dsR^{p_\bullet+ n}$ and $\xivec^{\text{total}}\in\dsR^{k}$. The model parameters $\phivec$ are then  re-parameterized as 
 \[\boldsymbol{\phi} = \muvec^{\text{total}} +  B^{\text{total}}\boldsymbol{\xi}^{\text{total}} + \dvec^{\text{total}}\circ \boldsymbol{\epsilon}^{\text{total}} \qquad \text{with } \zetavec=(\boldsymbol{\xivec}^{\text{total}},\boldsymbol{\epsilon}^{\text{total}})^\top\in\dsR^{k+p_\bullet+ n}\sim N(\nullvec,I),\]
 such that $\zetavec$ has a distribution being independent of the variational parameters $\lambdavec$ as desired.
 \end{proposition}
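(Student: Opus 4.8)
The plan is to verify that the proposed affine map reproduces exactly the block-structured Gaussian variational density given just before the proposition, and that the driving noise $\zetavec$ is standard normal and free of $\lambdavec$, so that the re-parameterization trick of \eqref{eq:gradELBO2} applies. First I would note that an affine transformation of a Gaussian is Gaussian, so it suffices to match the mean and the covariance. Since $\boldsymbol{\xi}^{\text{total}}\sim N(\nullvec,I_k)$ and $\boldsymbol{\epsilon}^{\text{total}}\sim N(\nullvec,I_{p_\bullet+n})$ are jointly standard normal and independent, taking the expectation of $\boldsymbol{\phi}=\muvec^{\text{total}}+B^{\text{total}}\boldsymbol{\xi}^{\text{total}}+\dvec^{\text{total}}\circ\boldsymbol{\epsilon}^{\text{total}}$ immediately gives $\dsE[\boldsymbol{\phi}]=\muvec^{\text{total}}=(\muvec_\theta^\top,\muvec_{\widetilde{w}}^\top)^\top$, which matches the stacked mean in the displayed VA.

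Next I would compute the covariance. Because $\boldsymbol{\xi}^{\text{total}}$ and $\boldsymbol{\epsilon}^{\text{total}}$ are independent with identity covariances, the two noise contributions are uncorrelated and the total covariance decomposes as $\Cov(\boldsymbol{\phi})=B^{\text{total}}(B^{\text{total}})^\top+\diag(\dvec^{\text{total}})^2$. I would then substitute the stated block forms $B^{\text{total}}=(B^\top,0_{k\times n}^\top)^\top$ and $\dvec^{\text{total}}=(\dvec^\top,\exp(\boldsymbol{\varrho}^\top))^\top$ and multiply out the blocks. The key algebraic point is that the zero block $0_{k\times n}$ kills all cross terms: the upper-left $p_\bullet\times p_\bullet$ block becomes $B B^\top+D^2=\Sigma_\theta$, the lower-right $n\times n$ block becomes $\diag(\exp(\boldsymbol{\varrho}))^2=\exp(\boldsymbol{\varrho})^2 I_n=\Sigma_{\widetilde{w}}$, and both off-diagonal blocks vanish. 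This reproduces exactly the block-diagonal covariance $\begin{bmatrix}\Sigma_\theta & 0\\ 0 & \Sigma_{\widetilde{w}}\end{bmatrix}$, and in particular recovers the assumed posterior independence between $\widetilde\wvec$ and $\thetavec$ as a structural consequence of the zero block rather than an additional assumption on $\zetavec$.

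Finally I would record that $\zetavec=(\boldsymbol{\xi}^{\text{total}},\boldsymbol{\epsilon}^{\text{total}})^\top\sim N(\nullvec,I_{k+p_\bullet+n})$ by construction, so its law contains no $\lambdavec=(\lambdavec_\theta,\lambdavec_{\widetilde{w}})$ dependence, which is precisely the condition needed to move the gradient inside the expectation as in \eqref{eq:ELBO2}--\eqref{eq:gradELBO2}. The argument is essentially a block-matrix bookkeeping exercise, so there is no serious obstacle; the only point requiring care is confirming that the factor structure of $\Sigma_\theta$ is inherited correctly after stacking, i.e.~that appending zero rows to $B$ and the log-transformed weight standard deviations to $\dvec$ leaves the $\thetavec$-block identical to the original $B B^\top+D^2$ while introducing no spurious coupling to the weight block. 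Since this follows directly from the placement of the $0_{k\times n}$ block, the verification is immediate and the proposition reduces to matching first and second moments of two Gaussians.
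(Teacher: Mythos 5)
Your proposal is correct: the paper states this proposition without an explicit proof, treating it as immediate by construction, and your moment-matching argument (affine image of standard Gaussian noise, with the $0_{k\times n}$ block killing the cross-covariance so that the stacked map reproduces $\Sigma_\theta=BB^\top+D^2$ and $\Sigma_{\widetilde{w}}$ on the diagonal and zeros off it) is exactly the justification the construction presupposes. The observation that $\zetavec\sim N(\nullvec,I_{k+p_\bullet+n})$ is free of $\lambdavec$ by construction is likewise the only remaining point, so nothing is missing.
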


For initialization of the additional variational parameters, we set $\mu_{\widetilde{w}_i}=0.98$  and $\varrho = 1$ following \citet{KucTraRanGelBle2017}. 
Algorithm \ref{alg:robust} in the Supplement~summarizes the resulting VI algorithm with robust fitting.

\setlength{\abovedisplayskip}{0.0cm}
\setlength{\belowdisplayskip}{0.0cm}
\section{Simulations}\label{sec:simul}
In this section we evaluate the performance of our method denoted as \texttt{abamlss} empirically. The aims are to (i) facilitate tuning of \texttt{abamlss}; (ii) compare \texttt{abamlss} to exact Bayesian inference in smaller data sets, where MCMC is feasible (estimation realized via the \texttt{bamlss} package); and (iii) benchmark the robust estimation procedure of \texttt{abamlss} from Section~\ref{sec:extend} against those implemented in \texttt{bamlss} and the penalized likelihood based approach of \citet{AebCanMarRad2021} (denoted as \texttt{robust gamlss}). 
We restrict ourselves to the main simulation design and overall results, whereas further simulation experiments and full details can be found in the Supplement,~Section~\ref{app:simulation}.
\subsection{Simulation design}
Due to the complexity of distributional regression models in general, it is usually not straightforward to setup a reasonable simulation design. To obtain realistic scenarios, we follow \citet{SmiStaKle2021} and base our simulation on  four real data sets of different sample sizes, complexity and response types, see Table~\ref{tab:data sets}. 
\begin{table}\caption*{Simulation study.}
	\begin{tabular}{p{1.5cm}p{1cm}p{5cm}p{2.7cm}p{4.8cm}}
		\hline 
  \hline
  data set & $n$ & Covariates & Response & Source \\
  \hline
\emph{Rents}    & 3,082 & area (m\textsuperscript{2}), year of construction, central heating, quality of  bath/kitchen, districts of Munich & net rent (EUR) & \citet{StaRigBas2021} \\
\emph{Zambia} & 4,847 & age/gender, mother's bmi/ employment/education& stunting & \citet{UmlAdlKneLanZei2015}\\
\emph{Fatalities} & 1,087 & weeks & fatalities & \citet{UmlKleSimZei2019}\\
\emph{Brain} & 1,567 &voxel coordinates &  median FPQ & \citet{Woo2006}\\
  \hline
   \hline
    \end{tabular}
    \caption{Details and source of the four data sets.}
    \label{tab:data sets}
\end{table}
Histograms and distributions of the responses can be found in Figure~\ref{fig:hist_y} in the Supplement~\ref{app:simulation}. The data set \emph{Brain} is used to evaluate the performance of the robust estimation algorithm; see Section~\ref{sec:appl} for details on this data set. We  fit  \texttt{abamlss} to each data set and define the resulting fitted model as data generating processes (DGPs) to simulate replications for evaluation of our and the benchmark methods.   From each of the four DGPs, we simulate 40  data sets of the same sample size as in the original data. A further 41-st data set is generated to evaluate out-of-sample performance. As response distributions, we choose  the gamma and Gaussian distributions for \emph{Rents}. For \emph{Zambia} we choose the Gaussian distribution, for \emph{Fatalities} the Box Cox power exponential distribution and for \emph{Brain} the gamma distribution.

To achieve aim (iii), we follow the simulation approach of \citet{AebCanMarRad2021} and create  contaminated versions of the \emph{Brain} data set replicates by adding artificial outliers. Specifically, we increase the observed response by 10 for 5\% of the data points randomly chosen from the area $Z_1>70$ and $Z_2>30$, where $(Z_1,Z_2)$ are the coordinates of the voxels. 
For the data sets \emph{Rents}, \emph{Zambia}, \emph{Fatalities},  we use $k=5$  for the factor covariance structure and $k=35$ for the \emph{Brain} data set due to the considerably increased parameter space. These values have been chosen based on comparing the mean lower bounds for different $k \in \lbrace 1,2,5,10,15,20,30,40,50\rbrace$. 
\paragraph*{Measures of performance}
Performance is evaluated on the 41-st data set using the log score (LS) and the continuous ranked probability score   \citep[CRPS;][]{GneTilRaf2007} to quantify the accuracy of the entire forecast distributions. Each measure is oriented such that smaller values indicate a better performance. To evaluate the robust estimation methods, we compare the LS and CRPS of robust and non-robust methods not only based on original data set replicates but also on the contaminated data set replicates.

\subsection{Main results}
\paragraph{Tuning \texttt{abamlss}}
We conducted a large number of experiments where we compared different settings  for $M$, the priors  for $\boldsymbol{\tau}_{jk}^2$ and $n_{\mbox{\scriptsize{sub}}}$. The full list of settings and detailed results can be found in Section~\ref{app:defaults} of the Supplement. 
The most important conclusions are as follows. First, as expected, increasing $M$ improves the estimation performance compared to $M=1$ but decreases the time efficiency.  The inclusion of a Gibbs step solves this issue as the algorithm converges faster. However, for smaller parameter spaces $M=1$ is sufficient. Subsampling performs similarly to the standard method and the Gibbs step method, but it outperforms them in time for the Zambia nutrition data set as can be seen in Table~\ref{tab:times}. The difference between the inverse gamma and SD priors for $\boldsymbol{\tau}_{jk}^2$ is small. %These results suggest $M=1$, no subsampling and inverse gamma prior for $\boldsymbol{\tau}_{jk}^2$ as default options. 
\begin{table}[htbp]
\small{
\centering	\caption*{Simulation study.}\begin{tabular}{l|lll}
\hline\hline
		Method & \emph{Rents} & \emph{Zambia} & \emph{Fatalities} \\
		\hline 
		\texttt{bamlss} & 22.34 &1.56& 1.49\\
		\texttt{abamlss} &7.50  &1.43& 2.72\\
		\texttt{abamlss with SD prior} &8.68 &1.32&2.68\\
				\texttt{abamlss with subsampling}&6.24 &\fbox{\textbf{1.19}}&2.12\\
		\texttt{abamlss with Gibbs} &\fbox{\textbf{2.98}} &1.38& \fbox{\textbf{0.99}}\\
		 \texttt{abamlss with Gibbs \& M=5}& 8.40 &3.12&4.79\\
		 		\texttt{abamlss with M=5}& 19.22 &4.68& 7.50\\
		\hline\hline
	\end{tabular}\caption[Computing times (min). Exact vs.~approximate]{\small{Reported are the average computation times in minutes of 10 replicates for each estimation setting for each data set.}}\label{tab:times}}
\end{table}
\begin{figure}[H]
\centering
\caption*{Comparison with MCMC.}
\includegraphics[scale = 0.35]{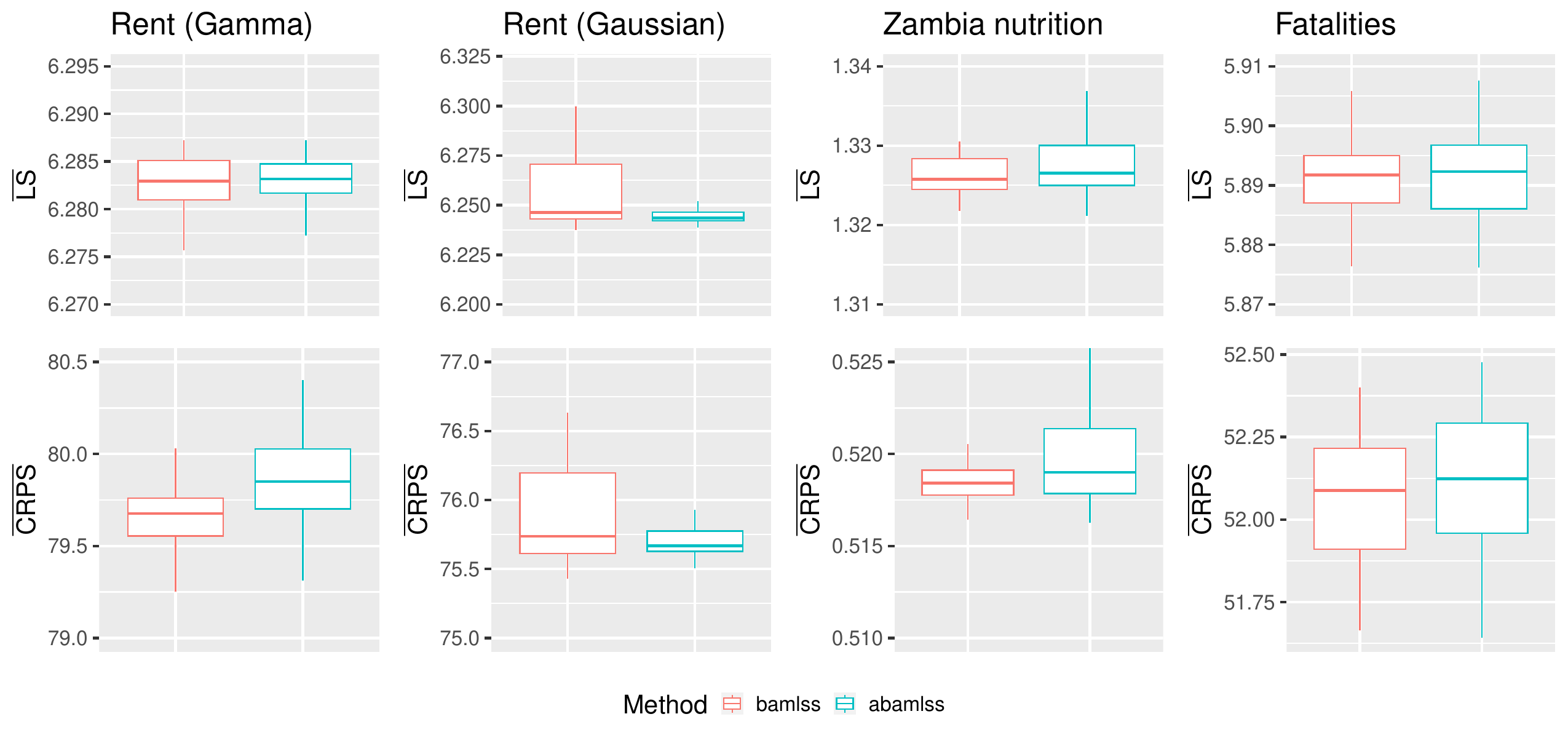}
\caption[Simulation results]{\small{For each data set, the mean LS and  CRPS are depicted for 40 replications for  \texttt{bamlss} (red) and \texttt{abamlss} (blue) with $M=5$ and Gibbs updates.}}\label{fig:bamlss_abamlss}
\end{figure}

\paragraph{Comparison with  \texttt{bamlss}}
 The simulation results presented in Figure~\ref{fig:bamlss_abamlss} show, that even though \texttt{abamlss} is an approximation it always has almost the same accuracy as its exact alternative, while  decreasing the computation time as can be seen in Table~\ref{tab:times}. For the \emph{Rents} data set with Gaussian response it even outperforms \texttt{bamlss}. The largest computational improvement can  be observed for the \emph{Rents} data set where, the default setting only needs three minutes while \texttt{bamlss} takes around 22 minutes.
 
\paragraph{Evaluation of \texttt{abamlss} with robust fitting}
\begin{figure}[H]
\centering \caption*{Robust fitting.}
 			\includegraphics[scale=0.35]{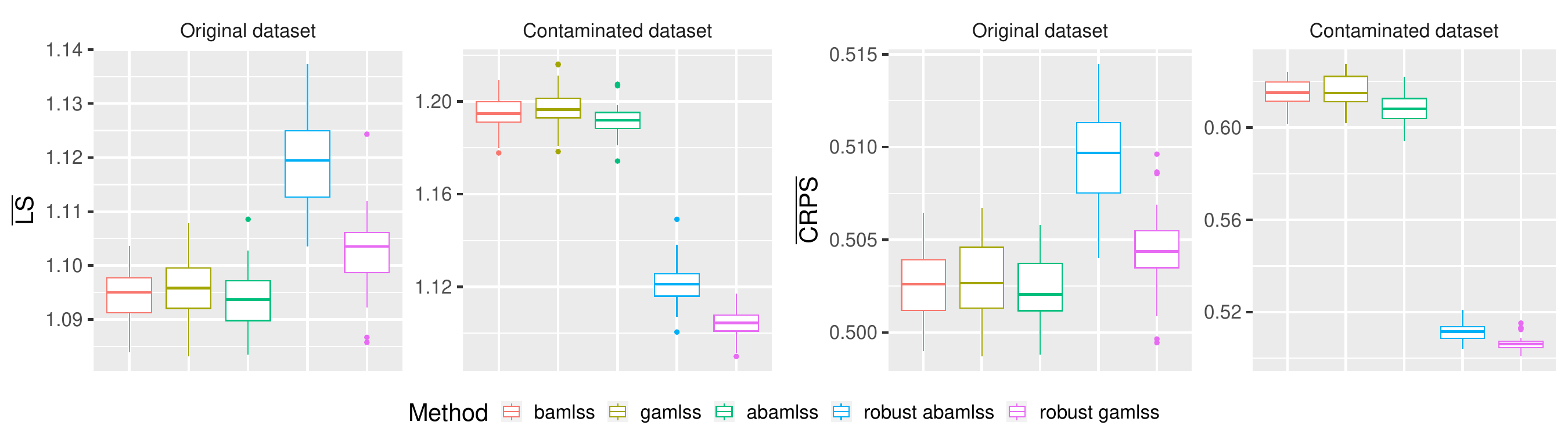}
\caption{\small{The mean LS and  CRPS are depicted for 40 replications for the unchanged \emph{Brain} data set and a contaminated version of it. We compare the Bayesian data reweighting (\texttt{robust abamlss}) with \texttt{abamlss}, \texttt{bamlss} and \texttt{gamlss} and \texttt{robust gamlss} for both data sets.}}\label{fig:robust}
 \end{figure}

As described above we use the unchanged \emph{Brain} data set and additionally a contaminated data set to evaluate the estimation performance of the robust fitting approach of \texttt{abamlss}. As hyperparameter for the prior distribution of the weights, we found  $a_w=0.2$ and $b_w=0.01$ to be optimal as can be seen in Figure~\ref{fig:brain_hyper} in the Supplement, Section~\ref{app:simrobust}. 

We compare \texttt{robust abamlss} with \texttt{abamlss}, \texttt{bamlss}, \texttt{gamlss}, and \texttt{robust gamlss}. From Figure~\ref{fig:robust} we make the following observations. \texttt{Robust abamlss} and \texttt{robust gamlss} outperform all other methods for the contaminated data set. \texttt{Robust gamlss} slightly outperforms \texttt{robust abamlss}, while being slightly slower as can be seen in Table~\ref{tab:times2} in the Supplement. \texttt{abamlss} performs as good as the exact method \texttt{bamlss} and \texttt{gamlss}. Table~\ref{tab:times2} shows that \texttt{robust abamlss} is computationally more efficient than \texttt{robust gamlss}, especially for the contaminated data set.% We can conclude that the newly derived robust approximate Bayesian approach performs similarly well as the state of the art robust method of \texttt{robust gamlss}, making it a competitive method.
\paragraph*{Summary}
\begin{itemize}
    \item Comparing the different versions of the \texttt{abamlss} algorithm we find that for all data sets, the inclusion of the Gibbs step in combination with  $M$=5 MC samples has highest accuracy. For less complex model specifications $M$=1 is sufficient. 
   % \item The simulation results for global annealing show that it had a strong impact on the fatalities data set, while $T_0=20$ seemed to work fine. However, we could not find strong support for this approach.
   \item Even though \texttt{abamlss} is an approximate method performance is comparable to the exact benchmark \texttt{bamlss}. Yet, \texttt{abamlss} is computationally more efficient.
    \item The comparison of \texttt{robust abamlss} with other methods shows that it performs similarly well as the state of the art robust method \texttt{robust gamlss}, making it a competitive method. Detailed results can be found in Supplement~\ref{app:simrobust}.
\end{itemize}

\setlength{\abovedisplayskip}{0.0cm}
\setlength{\belowdisplayskip}{0.0cm}
\section{Real Data Illustrations}\label{sec:appl}
We now illustrate our VI approach on two real data examples. The first uses data from COVID-19 to model infectious outbreaks. The second employs our robust fitting approach to detecting outliers in analysing brain activity.
\subsection{Modeling infectious outbreaks}
In an era characterized by advancing technology and growing computational power, the ability to predict and forecast the occurrence of diseases has become an invaluable asset in the field of healthcare. The prediction of any malady, whether it be infectious outbreaks like COVID-19 or other diseases, holds profound importance for several critical reasons. 
\paragraph{Data set description}
As an illustration, we consider data used in \citet{SchNicKau2021} and model newly registered German COVID-19 cases based on a smooth time trend, the inclusion of a country-specific discrete spatial effect including 412 districts in Germany and age-gender group effects with the age groups (00--04, 05--14, 15--34, 35--59, 60--79, 80+ years). Additionally, the model allows for delayed registrations for a specific date up to 7 days and includes a weekday dummy to control for the fact that, e.g., mondays will have a higher case numbers caused by delayed reporting over the weekend. Following \citet{SchNicKau2021} we use a total time frame of 21 days for predictions of the next day. Figure~\ref{fig:covid} in Supplement~\ref{app:emp} shows a  histogram of the discrete response. With 622,944 observations the data set is rather large such that exact Bayesian inference would be very time consuming. 

\paragraph{Model specification}
We consider  the count $N_{t,d,r,g}$ of newly registered infections on day $t$ in district $r$ and age-gender group $g$, reported on day $t+d$ ($d$ for the delay with $d\in \lbrace 1,\ldots,7\rbrace$, as response variable \citep{SchNicKau2021}. Due to the acknowledged overdispersion, we model the distribution of the response using a negative binomial distribution with location parameter $\mu=\exp(\eta_1)$ and dispersion parameter $\delta=\exp(\eta_2)$. Each parameter $\eta_k$, $k=1,2$ is linked to a structured additive predictor of the form
$ \eta_{t,d,r,g} = s_1(\mathit{t}) + s_2(lon,lat) + s_3(\mathit{recent days,r}) + \beta_g x_g +\phi \log(1+C_{t-1,d,r,g}) +\mathit{offset}_{r,g},$ 
where
 $s_1(t)$ is a smooth time trend of $t$ modeled with a Bayesian P-spline,
 $s_2(lon,lat)$ is a  spatial effect using thin plate splines,
  $s_3(\mathit{recent\,days,r})$ is a short/long term random district-specific random intercept depending on whether the dummy variable $\mathit{recent\,days}$ is 1 (for the time period within the last 7 days) or 0 (otherwise),
 $\beta_g x_g$ are age-gender group specific random intercepts,
 $\phi\log(1+C_{t-1,d,r,g})$ captures the time-related autoregressive component of the process with $C_{t,d,r,g} = \sum_{j=1}^dN_{t,j,r,g}$ representing the cumulative count over $d$ per age-gender group and district,
 and $\mathit{offset}_{r,g}$ are district and age-gender specific offsets.

\begin{figure}[htbp]\caption*{COVID outbreaks.}
				\hspace{-0.3cm}	\begin{subfigure}{0.5\textwidth}
			\includegraphics[scale=0.15]{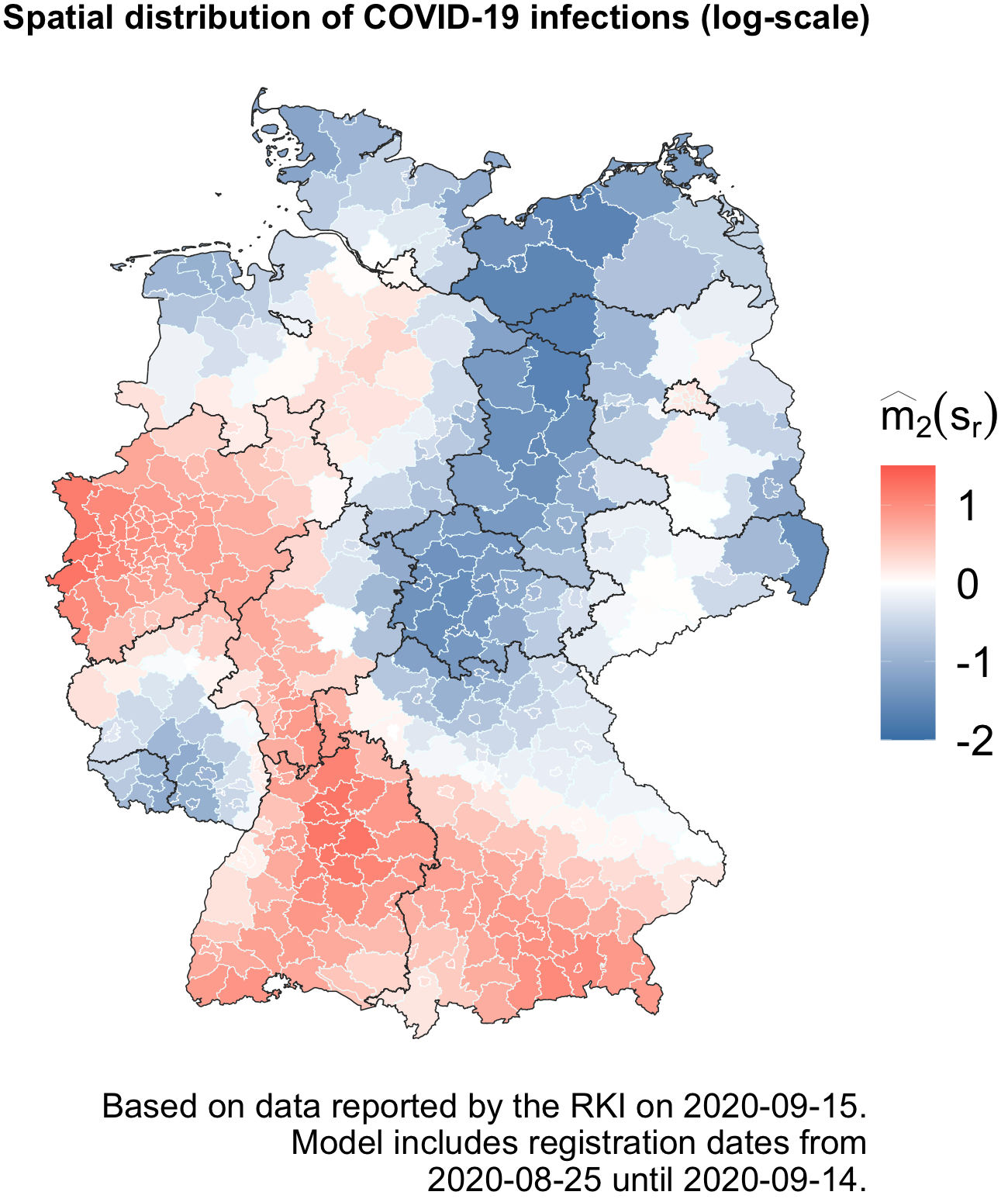}\vspace{0.5cm}\end{subfigure}\begin{subfigure}{0.5\textwidth}\includegraphics[scale=0.40]{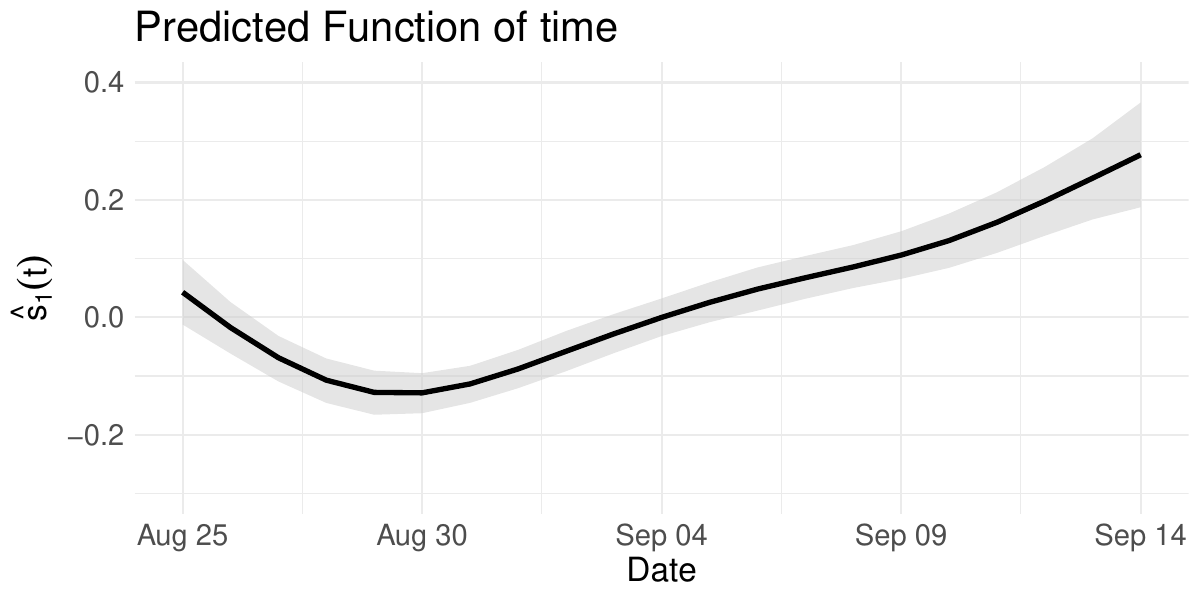}\vspace{0.5cm}\end{subfigure}\hspace{-0.4cm}
   \begin{subfigure}{0.5\textwidth}
			\includegraphics[scale=0.15]{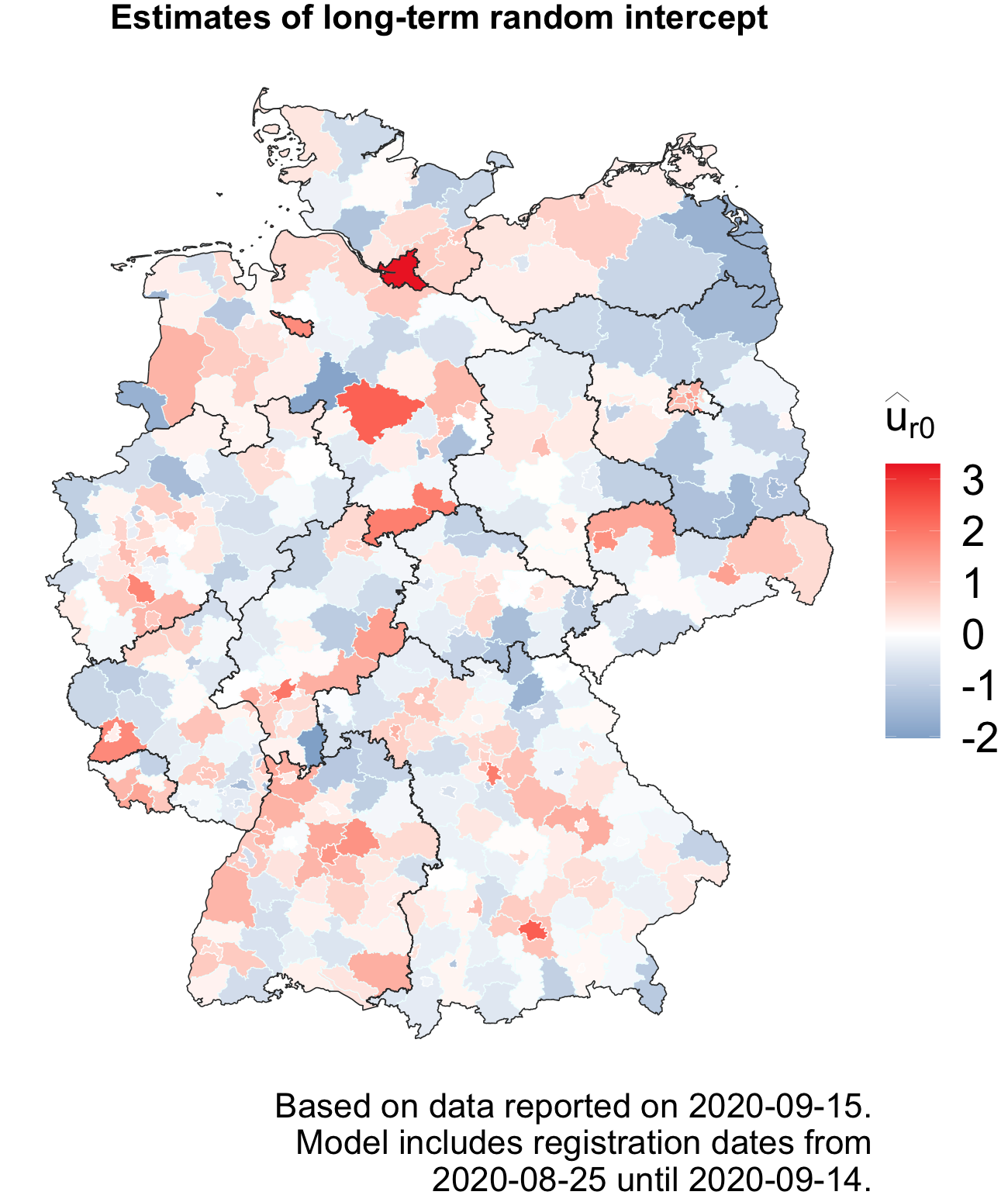}\end{subfigure}\hspace{0.2cm}\begin{subfigure}{0.5\textwidth}\includegraphics[scale=0.15]{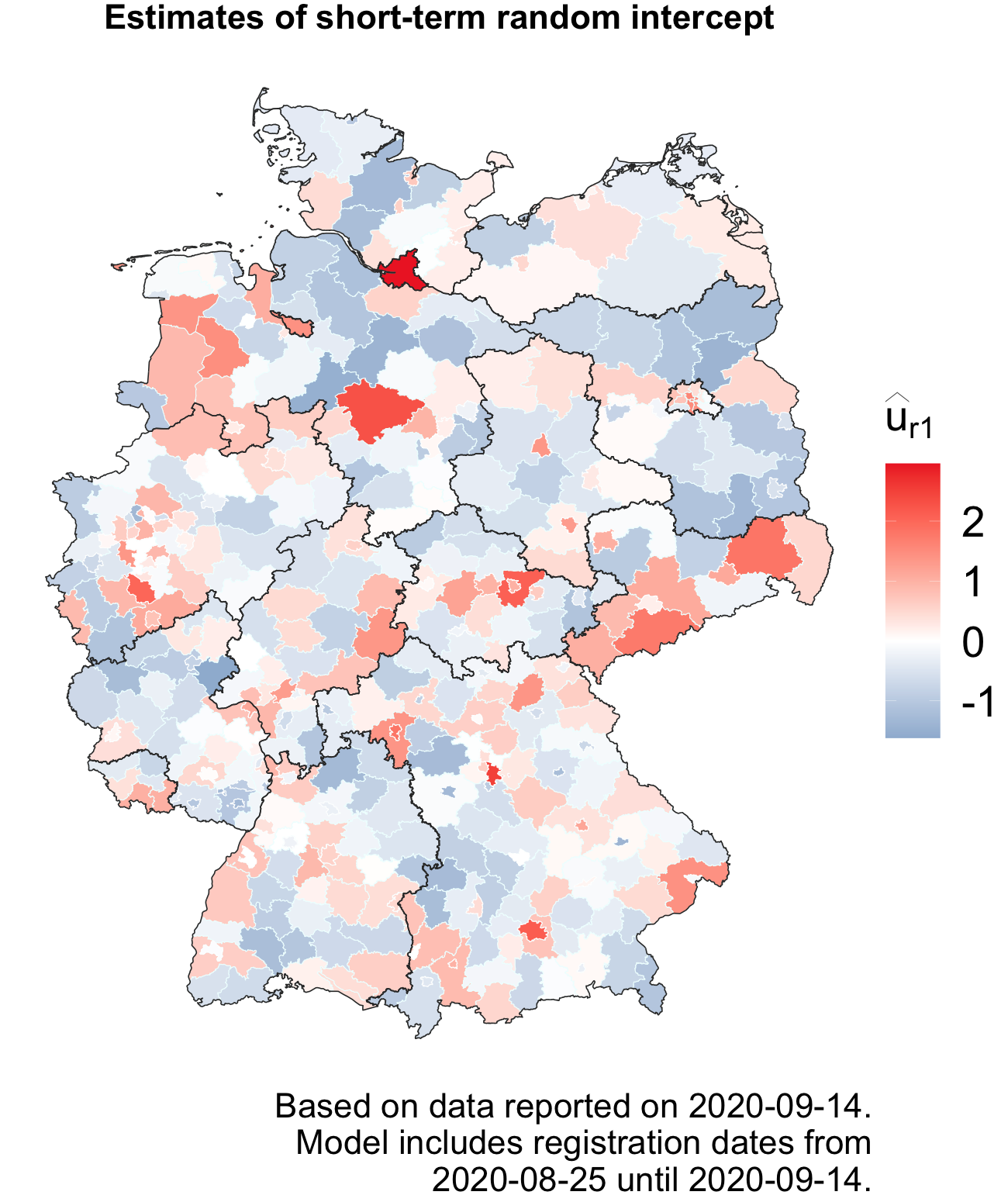}\end{subfigure}
			\caption{Posterior mean estimates of the smooth spatial effects (top left), of the smooth effect of time (top right), the region-specific effects (bottom left) and dynamics (bottom right) of COVID-19 infections based on data from August 25  until September 14 (21 days prior to September 15). For the spatial plots, we reused code generously shared by \citet{SchNicKau2021} in their GitHub repository. \url{https://github.com/gdenicola/Now-and-Forecasting-COVID-19-Infections}}\label{fig:covd1}\end{figure}

\paragraph{Results}
Similar to  \citet{SchNicKau2021}, we  focus our examination on the 15th of September, which preceded the second wave of the COVID-19 pandemic in Germany. 

Figure~\ref{fig:covd1} depicts the estimates of the included smooth effects. In line with the findings reported by \citet{SchNicKau2021}, we observe a decrease in cases until the end of August, which subsequently transitions into a steady and continuous increase leading up to the analysis date (cf.~upper right panel of Figure~\ref{fig:covd1}
The estimated smooth spatial effect for this specific date is depicted in the upper left panel. Notably, regions such as North Rhine-Westphalia, Bavaria, and Baden-W\"urttemberg exhibited strong positive impacts. The upper right panel shows the time trend from the last 21 days. A decline in documented cases was apparent until the conclusion of August, but subsequently, there was a resurgence in the weeks that followed, resulting in an overturn and a consistent uptick. 
The second row of the figure showcases district-specific random intercepts. In the left-hand panel, we assess the relative infection situation in the 21 days leading up to the analysis date. In the right-hand panel, we focus on the district-specific situation in the last 7 days. This analysis aids in identifying districts where the infectious situation has undergone recent changes, enabling authorities to be better prepared to provide support to these areas. For example, on that day, Hamburg has relatively high short- and long term effects which concludes in a relatively high infectious rate  within the last 7 days. In eastern Lower Saxony, we are observing a concerning trend characterized by a swift increase in the number of cases. This escalation has intensified over the past 7 days when compared to the preceding three-week period. We observe a similar trend for most of the districts in Saxony.

\subsection{Robust fitting}

\paragraph{Data set description}
 The \textit{Brain} data set from the R package \texttt{gamair} was previously studied in \citet{LanEllBul2004} and contains coordinates of each of the 1,567 voxels for the median of three measurements of the fundamental power quotient (FPQ) of $n=254$ study patients' brains as covariates. The FPQ is the brain response acquired during a study where a healthy participant had to generate words beginning with a cued letter, while the baseline condition was given by covertly repeating a letter. We consider the median FPQ (medFPQ) as reponse variable resulting in 1,567 oberservations (one per voxel).
 
16 observations have a median FPQ larger than 7, which results in a highly right skewed marginal distribution, see Figure~\ref{fig:hist_y} (lower right) in Supplement~\ref{app:simulation} for a histogram of the response. Nine of these observations are located in the upper right corner ($Z_1>70$ and $Z_2>30$), which typically negatively influence standard prediction algorithms due to overfitting or numerical instabilities. 
Most earlier analyses to this data set excluded two observations which are located in the middle on the back of the brain ($Z_2$ close to 10 and $Z_1$ close to 65). We will however not exclude these outliers to test the  robust version of our VI approach. 

\paragraph{Model specification}
Due to the right skewness and the strict positivity of response , we use a gamma distribution with $\mu$ representing the mean and $\sigma$ the shape. We model the influence of the voxels on the median FPQ by using tensor product P-splines with 10 knots in each direction. %Due to the positivity of the distributional parameters $\mu$ and $\sigma$ we furthermore use the logarithmic link functions.  
The hyperparameters of the beta distribution of the weights are set to $a_w = 0.2$ and $b_w = 0.01$, see Figure~\ref{app:simrobust} in  Supplement~\ref{app:simrobust} for a justification.

\paragraph{Results}
Figure~\ref{fig:brainapp} presents the predicted logarithmic median FQP  $\log(\mu)$ (first row) and the logarithmic shape parameter of the predictive distribuion $\log(\sigma)$ (second row) for \texttt{abamlss} (left) and \texttt{robust abamlss}. 
\begin{figure}[t]\caption*{Brain activity.}
				\hspace{-0.3cm}	\begin{subfigure}{0.5\textwidth}\caption{abamlss}
			\includegraphics[scale=0.5]{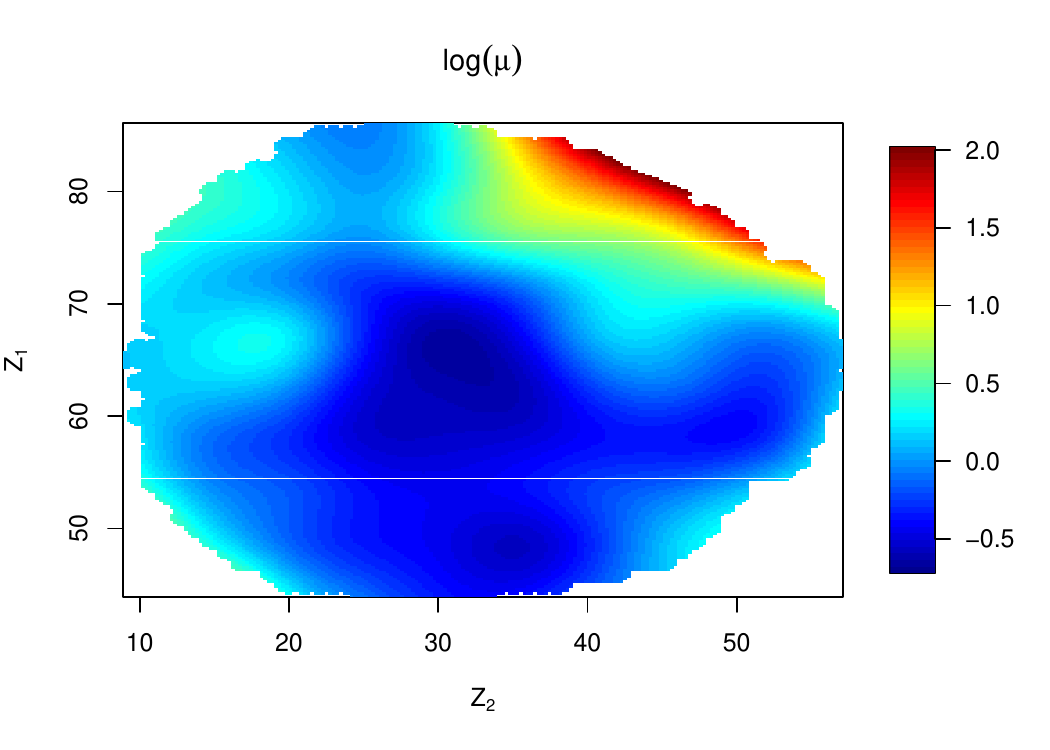}\end{subfigure}\hspace{0.5cm}\begin{subfigure}{0.5\textwidth}\caption{robust abamlss}\label{fig:brainapp_2}\includegraphics[scale=0.5]{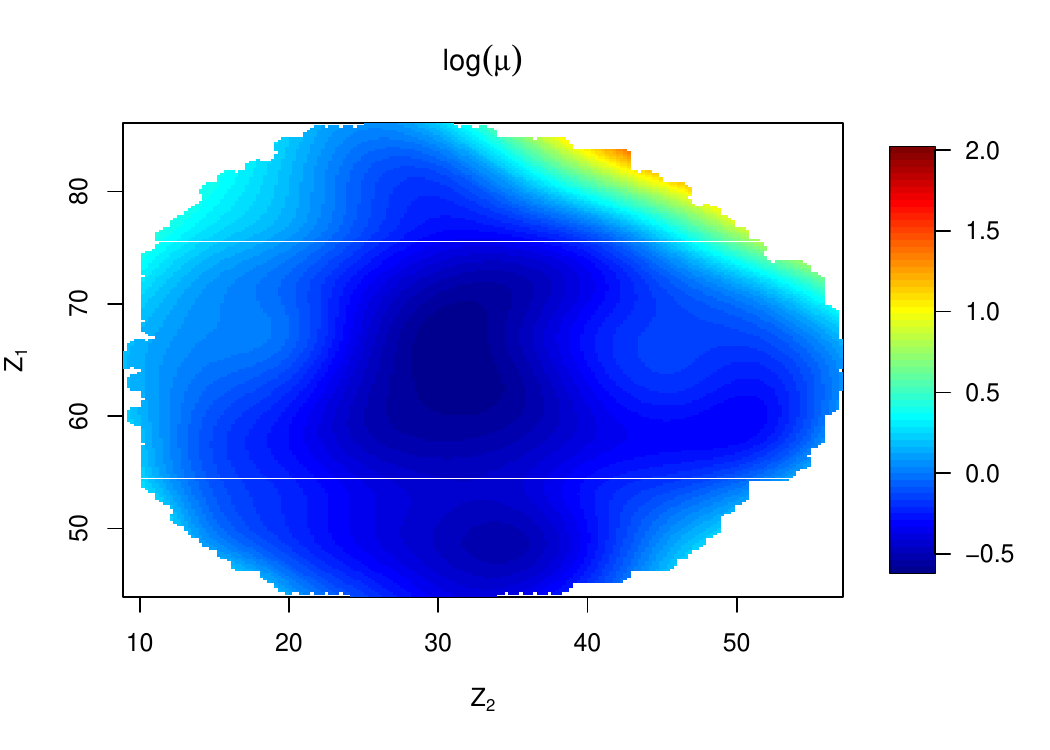}\end{subfigure}
     \begin{subfigure}{0.5\textwidth}\includegraphics[scale=0.5]{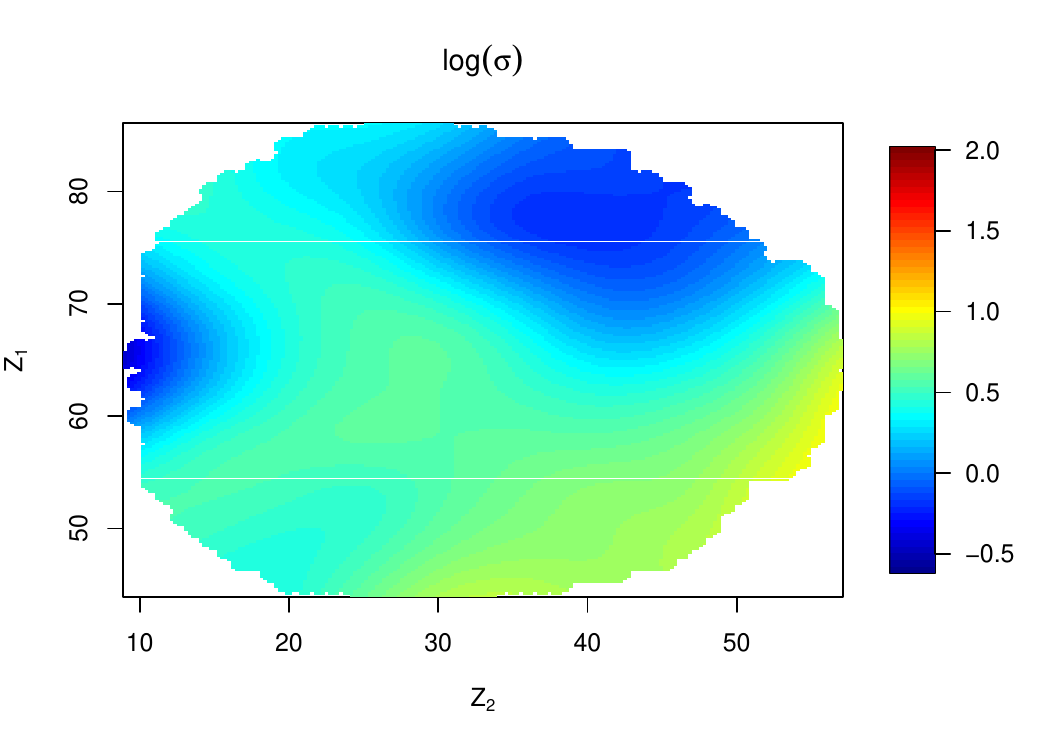}\end{subfigure}
     \hspace{0.5cm}\begin{subfigure}{0.5\textwidth}
					\includegraphics[scale=0.5]{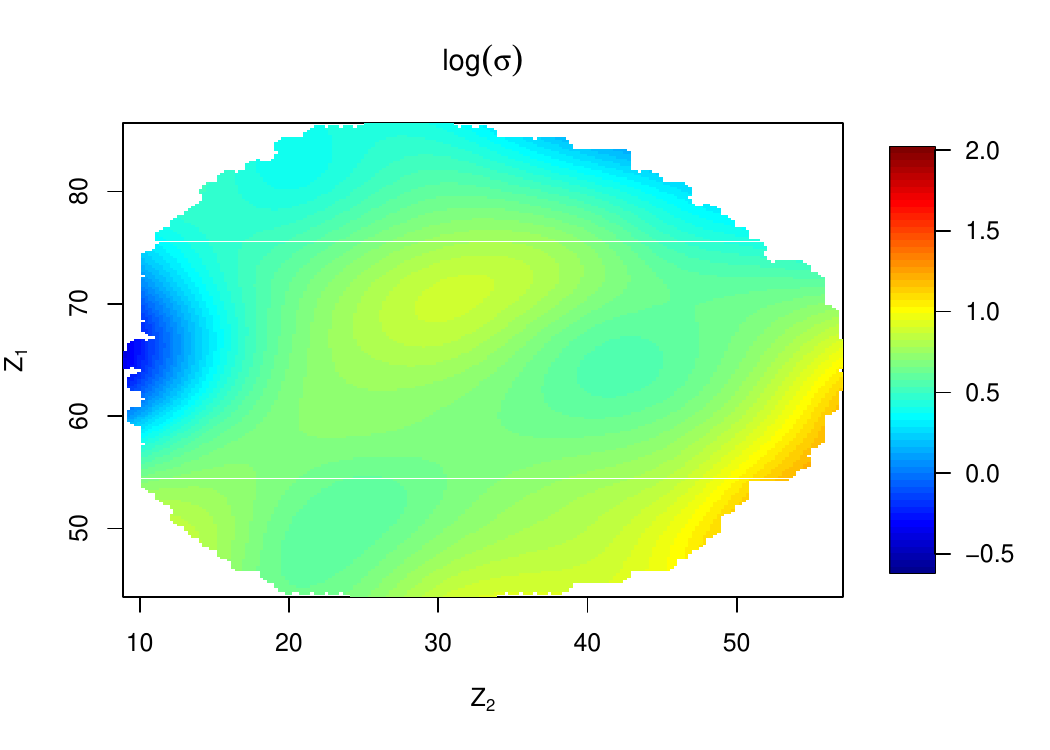}\end{subfigure}
			\caption{Posterior mean estimates of surfaces of the voxel coordinates on  $\log(\mu)$ (top row) and $\log(\sigma)$ (bottom row) using  \texttt{abamlss} (left column) and \texttt{robust abamlss} (right column).}\label{fig:brainapp}\end{figure}
While \texttt{bamlss} predicts high values in the upper right corner and the left center, the robust version predictions are rather smooth and less extreme in that area. The results are similar to \citet{AebCanMarRad2021}. As a result, the robust estimation method succeeds in down-weighting the extreme values as desired. 
\setlength{\abovedisplayskip}{0.0cm}
\setlength{\belowdisplayskip}{0.0cm}
\section{Discussion}\label{sec:discussion}

This paper developed VI for the analysis of large scale distributional regression models within the context of SADR/GAMLSS. Our approach comes with a number of merits and solutions in comparison to existing methods. Firstly, our VI framework does not require prior conjugacy and is not restricted to the class of GLMs/GAMs that model the conditional mean of a response variable only. Secondly, while estimation of fairly flexible SADR models using exact Bayesian inference with MCMC is often not only inefficient  but also time consuming and computationally demadning, the variational approach pushes these limits, allowing us to analyse large-scale distributional models  with many observations and complex predictor structures.  Thirdly, compared to non-Bayesian proposals such as backfitting or SGD type methods, VI comes with direct measures for uncertainty quantification and thus the direct availability of e.g.~credible intervals.  

Overall, we demonstrated the  potential of variational methods in the context of distributional models and extend it to allow for automatic outlier detection through a local annealing procedure. 

In the future, it would be of interest to leverage this potential further to e.g. ~integrate automatic variable selection in SADR \citep{KleCarKneLanWag2021}. One challenge here is the partly discrete model space that renders standard VI assumptions such as a Gaussian VA inappropriate but requires refinements of the variational setup\citep{Guo2023}.
In addition, more work needs to be done to quantify closeness of the VA to the true posterior and to extend our appraoch to more flexible VAs, such as  based on implicit copulas \citep{SmiLoa2022} or mixtures \citep{GunKohNot2021}. Last, it may be interesting to  to see how useful black box VI \citep{KucRajGelBle2015} can be for the general purpose of scalable estimation for SADR models.

%\spacingset{1.1}
%\bigskip

\bibliographystyle{Chicago}

\bibliography{references}

\clearpage
\setcounter{page}{1}
\thispagestyle{empty}
\bigskip
\begin{center}
{\sf\LARGE\bf SUPPLEMENTARY MATERIAL\\}%\label{sec::suppApp}
\bigskip
to\\
\bigskip
{\Large\bf Scalable Estimation for Structured Additive
Distributional Regression Through
Variational Inference\\}
\bigskip
by\\
\bigskip
{\large Jana Kleinemeier and Nadja Klein}
\end{center}
\bigskip
\bigskip
\bigskip
\begin{description}
\item[\textbf{\ref{app:algos}:}] Details on algorithm variants of VI for Bayesian SADR, including subsampling, global annealing and robust fitting.
\item[\textbf{\ref{app:simulation}}:] Details on the simulations studies.
\item[\textbf{\ref{app:emp}:}]
Details on the real data illustrations.
\end{description}

\newpage 

\setcounter{page}{1}
\setcounter{figure}{0}  
\setcounter{table}{0}  
\appendix

\renewcommand{\thesection}{\Alph{section}}
\renewcommand\thefigure{\Alph{section}.\arabic{figure}}
\renewcommand\thetable{\Alph{section}.\arabic{table}}

\noindent
\setcounter{page}{0}
\section{Algorithms}\label{app:algos}
In this section, we present the detailed algorithms of the newly introduced estimation approach and its variations. 

Algorithm~\ref{alg:VI} outlines the fundamental estimation procedure, providing a solid foundation for our approach.

When utilizing variational inference to approximate the posterior distribution through stochastic gradient ascent, one often encounters the challenge of getting stuck in suboptimal solutions known as poor local optima. To mitigate the problem of poor local optima, we introduce Algorithm~\ref{alg:subsampling} and Algorithm~\ref{alg:anneal}, both of which modify the likelihood slightly to navigate more effectively within the optimization landscape.

Going even further, Algorithm~\ref{alg:robust} integrates Bayesian data reweighting in the context of SADR. This advanced technique incorporates a random variable for each observation, offering a more robust approach to address the challenges posed by local optima in the context of variational inference.
\begin{algorithm}[H]
\setstretch{1.35}
\caption{\setstretch{1.35}\smallskip VI algorithm for Bayesian SADR.}\label{alg:VI}
\smallskip Set $t=0$. Initialize $\lambdavec^{(t)}=((\muvec^{(t)})^\top, (\text{vech}(B^{(t)}))^\top , (\dvec^{(t)})^\top)^\top$
\begin{algorithmic}[1]
\While{Stopping rule is not satisfied} \nonumber
\State Set iteration $t=t+1$
\State Generate $m=1,...,M$ samples:
$\quad \boldsymbol{\epsilon}_m^{(t)} \sim \mathcal{N}(\nullvec,I_{p_\bullet})$ and $\boldsymbol{\xi}_m^{(t)} \sim \mathcal{N}(\nullvec,I_{k})$
\State Compute $\boldsymbol{\theta}_m^{(t)} = 
 	    \begin{cases}
 	     &  t^0(\boldsymbol{\zeta}_m^{(t)},\boldsymbol{\lambda}^{(t)}) = \boldsymbol{\beta}_m^{(t)} \mbox{ if conjugacy for $\tauvec$ holds}\\
 	     & t(\boldsymbol{\zeta}_m^{(t)},\boldsymbol{\lambda}^{(t)}) \mbox{ otherwise}
 	    \end{cases}$
\If{conjugacy for $\tauvec$ holds}
\State{Generate $m=1,...,M$ samples of $(\tauvec^{(t)}_{jk,m})^2$
\Statex$ \qquad \qquad (\tauvec^{(t)}_{jk,m})^2\sim \text{IG}(a_{jk}+\frac{1}{2}\text{rk}(K_{jk}),b_{jk}+\frac{1}{2}\boldsymbol{\beta}_{jk,m}^{(t) \top}K_{jk}\boldsymbol{\beta}_{jk,m}^{(t)}) \qquad \forall j=1,\ldots,J_k$, $k=1,\ldots,K$}
\EndIf
%\State Sample $N_\text{sub}$ observations
\State Construct unbiased estimates 
\begin{equation*}
\nabla_\lambda \widehat{ \mathcal{L}(\boldsymbol{\lambda}^{(t)})}=\frac{1}{M}\sum_{m=1}^M \left( \frac{d (\thetavec^{(t)})^\top}{d\boldsymbol{\lambda}^{(t)}}\nabla_{\theta^{(t)}} \left[\log (p(\thetavec^{(t)})\right.\right.
 		{\left.\left.  +\sum_{i=1}^{n}\log \sum p(y_i|\thetavec^{(t)}) - \log q_{\lambda^{(t)}}(\thetavec^{(t)})\right]\right)}
 		\end{equation*}
 		
\State Compute $\rhovec^{(t)}$ using ADADELTA
\State Set $\boldsymbol{\lambda}^{(t+1)}=\boldsymbol{\lambda}^{(t)}+\boldsymbol{\rho}^{(t)} \circ \nabla_\lambda\widehat{\mathcal{L}(\boldsymbol{\lambda}^{(t)})}$
\EndWhile
\end{algorithmic}
\end{algorithm}

\begin{algorithm}[H]
\setstretch{1.35}
\caption{\setstretch{1.35}\smallskip Subsampling VI for Bayesian SADR}\label{alg:subsampling}
\smallskip Set $t=0$. Initialize $\lambdavec^{(t)}=((\muvec^{(t)})^\top, (\text{vech}(B^{(t)}))^\top , (\dvec^{(t)})^\top)^\top$
\begin{algorithmic}[1]
\While{Stopping rule is not satisfied}
\State Set $t = t + 1$
	\State Generate $m=1,...,M$ samples:
$\quad \boldsymbol{\epsilon}_m^{(t)} \sim \mathcal{N}(\nullvec,I_{p_\bullet})$ and $\boldsymbol{\xi}_m^{(t)} \sim \mathcal{N}(\nullvec,I_{k})$
	\State  Compute $\boldsymbol{\theta}_m^{(t)}=t(\zetavec^{(t)},\lambdavec^{(t)})$
	\State  Sample $N_\text{sub}$ observations
	\State  Construct unbiased estimates 
 \begin{equation*}
 \nabla_\lambda \widehat{ \mathcal{L}(\boldsymbol{\lambda}^{(t)})}=\frac{1}{M}\sum_{m=1}^M \left( \frac{d (\thetavec^{(t)})^\top}{d\boldsymbol{\lambda}^{(t)}}\nabla_{\theta^{(t)}} \left[\log p(\thetavec^{(t)})\right.\right.\left.\left. +\frac{n}{n_\text{sub}}\sum_{i=1}^{n_{\text{sub}}}\log \sum p(y_i|\thetavec^{(t)}) - \log q_{\lambda^{(t)}}(\thetavec^{(t)})\right]\right)
 		\end{equation*}
\State Compute $\rhovec^{(t)}$ using ADADELTA
\State Set $\boldsymbol{\lambda}^{(t+1)}=\boldsymbol{\lambda}^{(t)}+\boldsymbol{\rho}^{(t)} \circ \nabla_\lambda\widehat{\mathcal{L}(\boldsymbol{\lambda}^{(t)})}$
%	\caption{Subsampling VI for Bayesian SADR.}\label{alg:subsampling}
 \EndWhile
\end{algorithmic}
\end{algorithm}

\begin{algorithm}[H]
\setstretch{1.35}
\caption{\setstretch{1.35}\smallskip Global annealing of VI for Bayesian SADR}\label{alg:anneal}
\smallskip Set $t=0$. Initialize $\lambdavec^{(t)}=((\muvec^{(t)})^\top, (\text{vech}(B^{(t)}))^\top , (\dvec^{(t)})^\top)^\top$ and $T_0$
\begin{algorithmic}[1]
\While{Stopping rule is not satisfied}
\State Set $t = t + 1$
	\State Generate $m=1,...,M$ samples:
$\quad \boldsymbol{\epsilon}_m^{(t)} \sim \mathcal{N}(\nullvec,I_{p_\bullet})$ and $\boldsymbol{\xi}_m^{(t)} \sim \mathcal{N}(\nullvec,I_{k})$
	\State  Compute $\boldsymbol{\theta}_m^{(t)}=t(\zetavec^{(t)},\lambdavec^{(t)})$
	%\State  Sample $n_\text{sub}$ observations
	\State  Construct unbiased estimates 
 \begin{equation*}
\nabla_\lambda \widehat{ \mathcal{L}(\boldsymbol{\lambda}^{(t)})}=\frac{1}{M}\sum_{m=1}^M \left( \frac{d (\thetavec^{(t)})^\top}{d\boldsymbol{\lambda}^{(t)}}\nabla_{\theta^{(t)}} \left[\log (p(\thetavec^{(t)})\right.\right.
 		{\left.\left. +\frac{1}{T}\sum_{i=1}^{n}\log \sum p(y_i|\thetavec^{(t)}) - \log q_{\lambda^{(t)}}(\thetavec^{(t)})\right]\right)}
 			\end{equation*}
\State Compute $\rhovec^{(t)}$ using ADADELTA
\State Set $\boldsymbol{\lambda}^{(t+1)}=\boldsymbol{\lambda}^{(t)}+\boldsymbol{\rho}^{(t)} \circ \nabla_\lambda\widehat{\mathcal{L}(\boldsymbol{\lambda}^{(t)})}$
	\State 	If $t\in \lbrace 100,200,300,\ldots,9000\rbrace$ reduce $T$ 
%	\caption{Global annealing of VI for Bayesian SADR.}
 \EndWhile
\end{algorithmic}
\end{algorithm}

\begin{algorithm}[H]
\setstretch{1.35}
\caption{\setstretch{1.35}\smallskip Robust fitting of VI for Bayesian SADR.}\label{alg:robust}
\smallskip Initialize with $t=0$ $\lambdavec_\theta^{(t)}=(\muvec_\theta^{(t)},\text{vec}(B^{(t)}),\dvec^{(t)}))$ , $\lambdavec_{\widetilde{w}}^{(t)}=(\muvec_{\widetilde{w}}^{(t)})=0.98\times \boldsymbol{1},\boldsymbol{\varrho}^{(t)}=\boldsymbol{1}$); $(\lambdavec^{(t)}=(\lambdavec_\theta^{(t)})^\top,(\lambdavec_{\widetilde{w}}^{(t)})^\top)^\top$:
\begin{algorithmic}[1]
\While{Stopping rule is not satisfied}
\State Set $t = t + 1$
\State Generate $m=1,...,M$ samples:  $\boldsymbol{\xi}_m^{(t)} \sim \mathcal{N}(\nullvec,I_k)$ and $\boldsymbol{\epsilon}_m^{\text{total}(t)} \sim \mathcal{N}(0,I_{p_\bullet+n})$
\State  Compute $\boldsymbol{\phi}_m^{(t)}=\muvec^{\text{total} (t)}+B^{\text{total} (t)}\boldsymbol{\zeta}_m^{(t)}+\dvec^{\text{total} (t)}\boldsymbol{\epsilon}_m^{\text{total} (t)}$
%\State  Sample $N_\text{sub}$ observations
\State  Construct unbiased estimates 
	$\nabla_{\lambda_\theta} \widehat{ \mathcal{L}(\lambdavec^{(t)})}$ and $\nabla_{\lambda_{\widetilde{w}}} \widehat{ \mathcal{L}(\lambdavec^{(t)})}$
\State Compute $\rhovec^{(t)}$ using ADADELTA
\State Set $\boldsymbol{\lambda}^{(t+1)}=\boldsymbol{\lambda}^{(t)}+\boldsymbol{\rho}^{(t)} \circ \nabla_\lambda\widehat{\mathcal{L}(\boldsymbol{\lambda}^{(t)})}$
%	\caption{Robust fitting of VI for Bayesian SADR}\label{alg:robust}
\EndWhile
\end{algorithmic}
\end{algorithm}

\clearpage

\section{Simulations}\label{app:simulation}
As described in the main text, we use four existing data sets to conduct the simulation study. All four data sets are available publicly in R packages listed in Table~\ref{tab:data sets}. Goal is to 
\subsection{Histograms of responses}
To evaluate the performance of the newly introduced estimation approach, we look at four rather different response distributions. The kernel densities of each response can be seen in Figure~\ref{fig:hist_y}. 
\begin{figure}[H]
		\includegraphics[width = 0.49\textwidth]{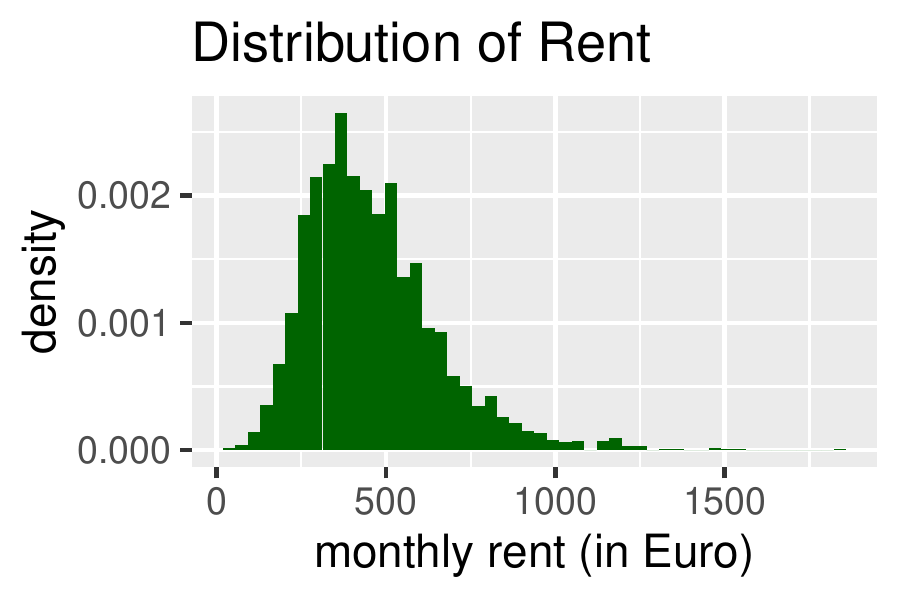}
		\includegraphics[width = 0.49\textwidth]{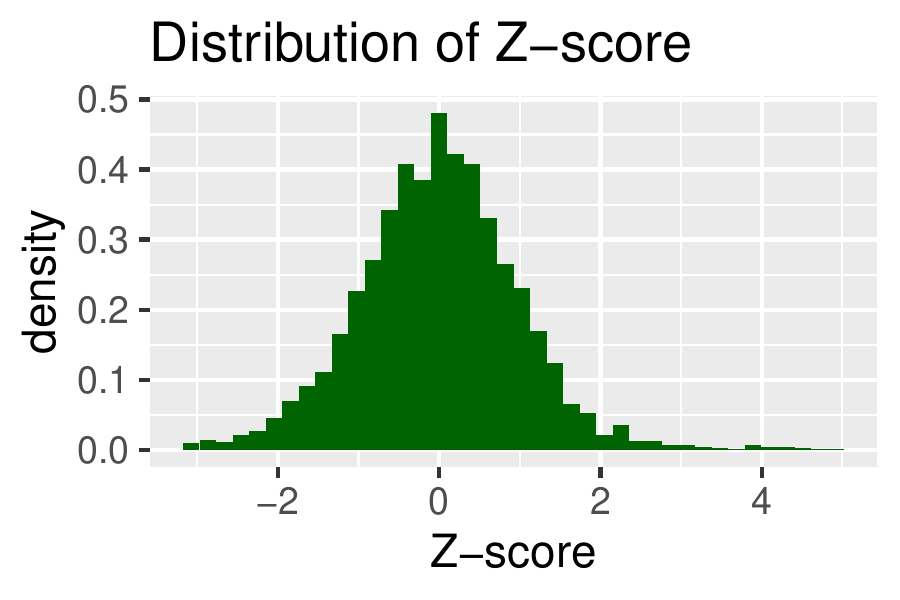}
		\includegraphics[width = 0.49\textwidth]{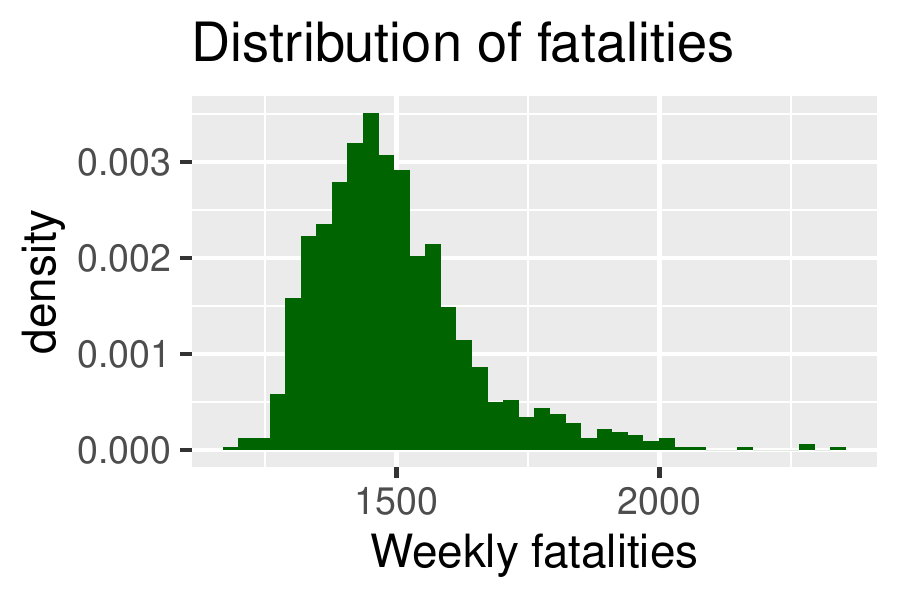}
		\includegraphics[width = 0.49\textwidth]{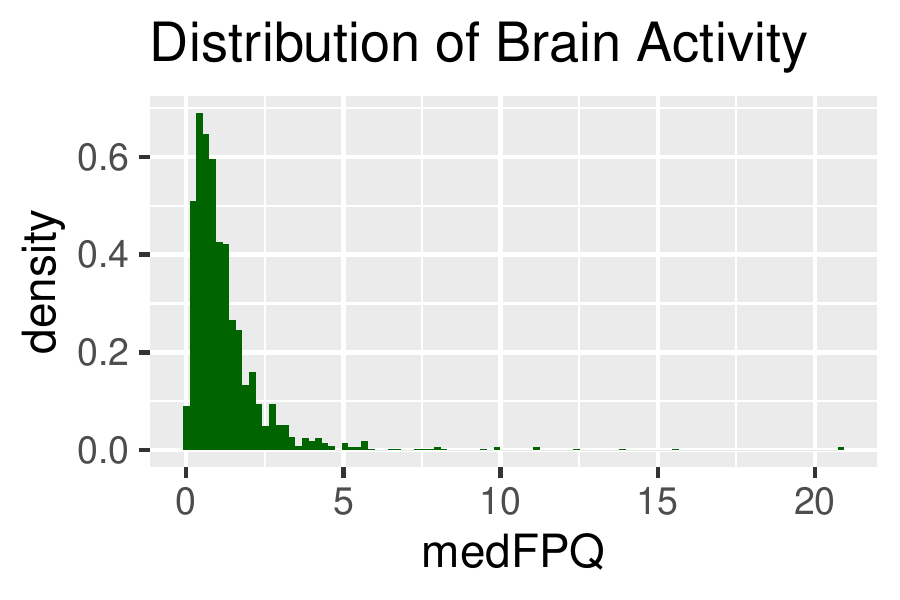}
 \caption{Histograms of the four responses.}\label{fig:hist_y}
\end{figure}

\subsection{Tuning \texttt{abamlss}}\label{app:defaults}
To identify the optimal parameters for our estimation approach, we conduct a simulations study where we compare different parameter combinations. Next to identifying how large the sampling for the estimation of the gradient has to be (\textit{M}), we want to analyze the effects of introducing subsampling and a Gibbs step. Additionally, we look at different prior distributions for $\tauvec$. Table~\ref{tab:abbr_methods} lists all settings we compare. While Figure~\ref{fig:abamlss_settings} depicts the simulation results. A summary of the findings can be found in Section~\ref{sec:simul}.
\begin{table}[H]\small{
	\begin{tabular}{l|l}
		\hline \hline
		\texttt{abamlss} & $M$=1, $n_{\mbox{\scriptsize{sub}}}=n$, IG prior \\
		&\\
		\texttt{abamlss with SD prior}  & $M$=1, $n_{\mbox{\scriptsize{sub}}}=n$, SD prior \\
		&\\
		\texttt{abamlss with $M$=5} & $M$=5, $n_{\mbox{\scriptsize{sub}}}=n$, IG prior \\
		&\\
		\texttt{abamlss with subsampling} & $M$=1, $n_{\mbox{\scriptsize{sub}}}=0.4\times n$, IG prior  \\
		&\\
		\texttt{abamlss with Gibbs step} & \texttt{abamlss} settings with Gibbs step extension\\
		&\\
		\texttt{abamlss with Gibbs step \& $M$=5} & \texttt{abamlss} with $M=5$ setting with Gibbs step \\
%		&\\
%		\texttt{abamlss \& AVI} & robust fitting with global annealing\\
%		&\\
%		\texttt{robust abamlss}  & robust fitting with data reweighting\\
		\hline
	\end{tabular}}\caption[Comparing different \texttt{abamlss} specifications]{\small{Reported are the different \texttt{abamlss} settings used for the size of the Monte Carlo samples $M$, the number of data points evaluated in each iteration $n_{\mbox{\scriptsize{sub}}}$, the hyperprior for the smoothing variance and whether an extension of Section \ref{sec:extend} is used. }}\label{tab:abbr_methods}
\end{table}
\begin{figure}[H]
 \centering
 \caption*{Tuning \texttt{abamlss}.}
 \includegraphics[scale = 0.35]{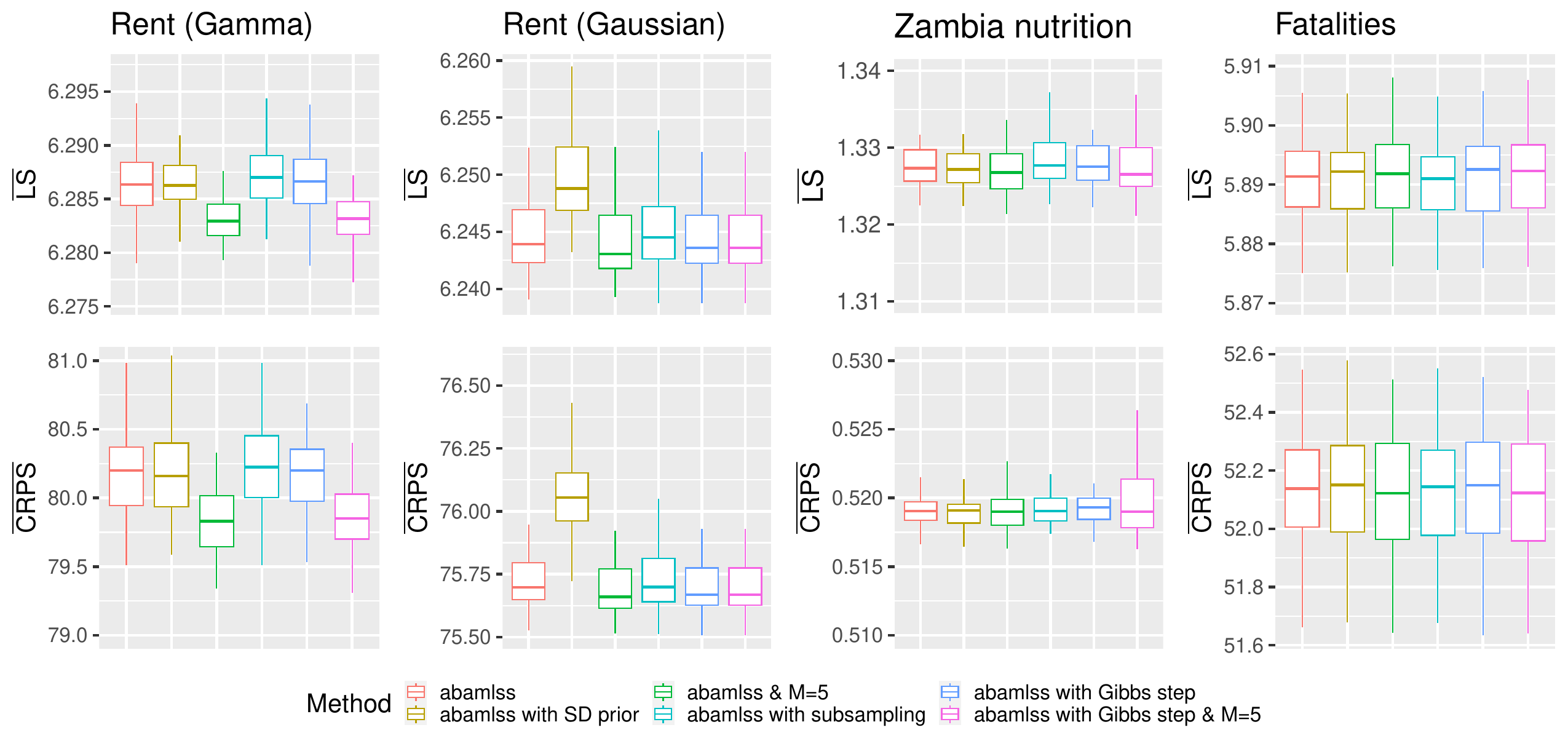}
 \caption[Simulation results]{\small{For each data set, the mean LS and mean CRPS are depicted for 40 replications for different settings of abamlss. \texttt{abamlss} stands for the general estimation procedure with one MC sample $M$=1 to estimate the gradient, the inverse gamma prior for the scaling parameters $\tauvec^2$ and with no subsampling and no inclusion of a Gibbs step. }}\label{fig:abamlss_settings}
  \end{figure}
The global annealing approach introduces a dynamic global temperature that gradually decreases over time. This temperature manipulation strategy serves a dual purpose: it initially encourages exploration of the optimization function, and as it decreases, it shifts focus towards fine-tuning the fit. To rigorously assess the impact of this temperature variation, we undertake a dedicated simulation study. In this study, we aim not only to evaluate the effectiveness of global annealing but also to pinpoint the optimal starting temperature, denoted as $T_0$. The results are visually presented in Figure~\ref{fig:ann}, and upon careful analysis, we draw the conclusion that there isn't a significant performance improvement.
\begin{figure}[H]
\caption*{Global annealing.}
\centering\includegraphics[scale=0.25]{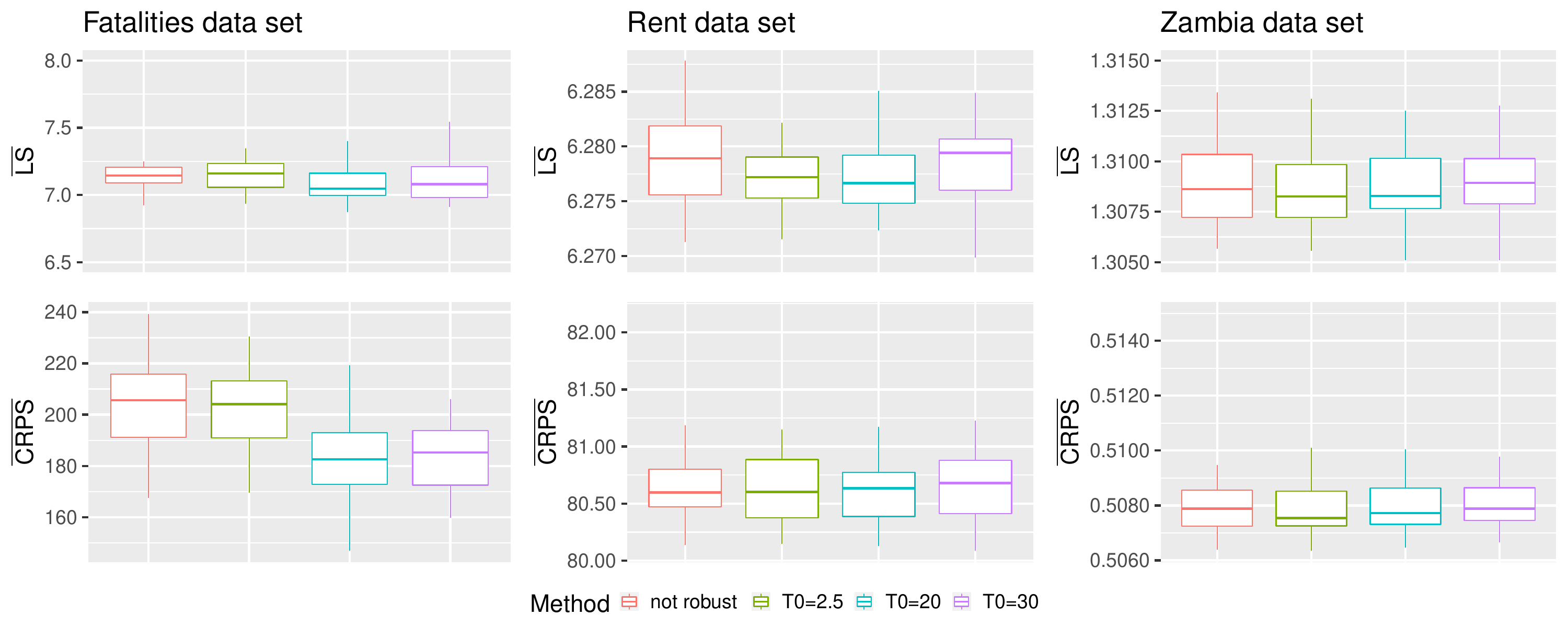}\caption[Annealing for different data sets]{\small{Based on \texttt{abamlss} we simulated  41 replicates where 40 were used to estimate the model with the non robust version \texttt{abamlss} and with three different starting temperatures of AVI. The last replicate was used for out of sample prediction. The mean CRPS and mean LS of each replicate is plotted}}\label{fig:ann}
\end{figure}
 \subsection{Robust fitting}\label{app:simrobust}
For the robust version of the estimation procedure introduced in Section~\ref{sec:extend}, we need to identify the optimal hyperparameters for the prior distributions of the weights introduced. Important to note is that this depends on the specific data set at hand so that the results found here cannot be necessarily transfered to other data. Figure~\ref{fig:brain_hyper} presents the results of the simulation. For the normal data set we see that \texttt{abamlss} outperforms any robust method when looking at the mean LS and mean CRPS. For the contaminated data set however, all robust methods outperform \texttt{abamlss}. The robust method with a = 0.2 and b = 0.01 shows best prediction accuracy so that we will choose this distribution as our hyperprior specification. The Jeffreys prior (a = 0.5 and b = 0.5) of the beta distribution is the second best specification. 
\begin{figure}[H]
	\centering
 \caption*{Robust fitting.}
	\includegraphics[scale=0.3]{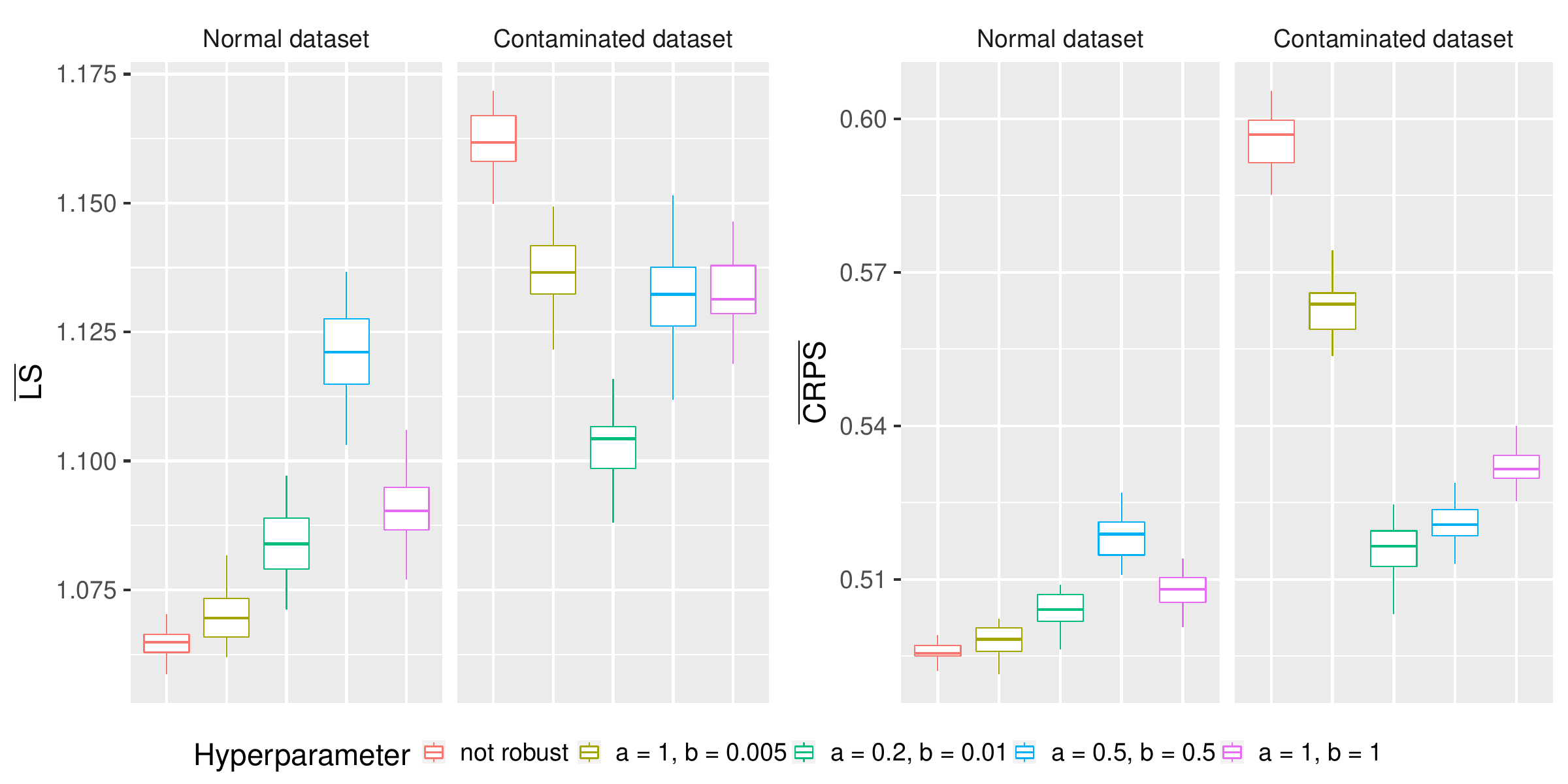}\caption[Estimation performance for varying hyperprior specifications]{\small{We simulated 40 replicates based on the DGP of an \texttt{abamlss} estimation and generated for each replicate a contaminated data set. We estimated the basic \texttt{abamlss} algorithm next to different versions of \texttt{robust abamlss} algorithm where we vary the specification of the hyperparameter of the beta distribution. We then evaluated the prediction accuracy of the 40 normal replicates and the 40 contaminated replicates on the same 41. replicate and calculated the mean CRPS and mean LS}.}\label{fig:brain_hyper}
\end{figure}

 \begin{table}[H]
	\small{
		\centering	\begin{tabular}{l|cc|cc}
  \hline\hline
			 &  \multicolumn{2}{c}{Normal data set (min)}  &  \multicolumn{2}{c}{Contaminated data set (min)}\\
		Method	 & not robust & robust & not robust & robust \\
			\hline 
			\texttt{bamlss} & 1.50 &- &1.49& -\\
			\texttt{abamlss} &2.50  & 12.90&1.42&12.90\\
			\texttt{gamlss} &0.10&13.77&1.44& 20.66 \\  
			\hline\hline
		\end{tabular}\caption[Computation time (min) - Bayesian data reweighting]{\small{We took the mean computation time in mins of 10 replicates for each estimation setting for the normal and the contaminated data set.  All settings used initialization and the stopping criteria based on the median lower bound described in Section \ref{sec:VB}}}\label{tab:times2}}
\end{table}
\section{Real Data Illustrations}\label{app:emp}
\subsection{Modelling infectious outbreaks}
We rely on the COVID-19 dataset provided by the Robert-Koch-Institute. We utilize the conveniently aggregated data, thoughtfully prepared by \citet{SchNicKau2021}, which is hosted at the Leibniz-Rechenzentrum (LRZ). The dataset can be accessed and downloaded from the following location: \url{https://syncandshare.lrz.de/getlink/fiPvZjnVzKNsuwZ5Upzgy7/}. 

Figure~\ref{fig:covid} depicts the histogram of the response while Table~\ref{tab:covid_descriptives} present additional information of its distribution.
\begin{figure}[H]
    \centering
    \includegraphics[scale = 0.7]{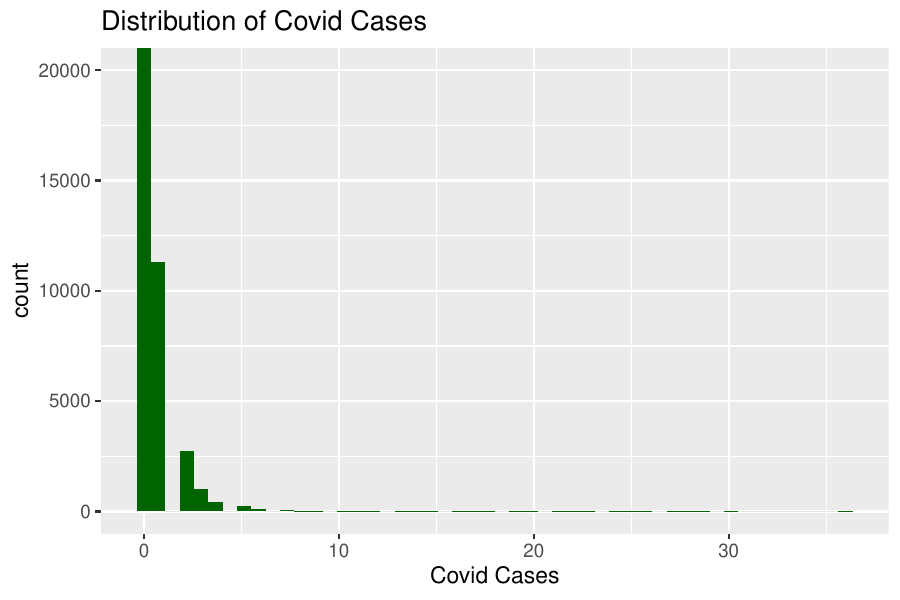}
    \caption{Histogram of the COVID cases. Important to note is that the first bar is cut for better visibility of the rest of the histogram.}\label{fig:covid}
\end{figure}

\begin{table}[H]
    \centering
    \begin{tabular}{c|c|c|c}
        min   & median & mean & max \\
        \hline \hline
        0 & 0 & 0.04 & 36 
    \end{tabular}
    \caption{Descriptives of COVID cases in the time range between 2020-08-25 up until 2020-09-14}
    \label{tab:covid_descriptives}
\end{table}
\end{document}